\newcommand{\beq}{\begin{equation}}
\newcommand{\eeq}{\end{equation}}
\newtheorem{lem}{Lemma}
\newtheorem{theorem}{Theorem}
\newtheorem{definition}{Definition}
\theoremstyle{definition}
\newtheorem{example}{Example}
\newtheorem*{prob*}{Problem}
\theoremstyle{remark}
\newtheorem{remark}{Remark}
\global\long\def\RR{\mathbb{R}}
\global\long\def\ZZ{\mathbb{Z}}
\global\long\def\EE{\mathbb{E}}
\global\long\def\FF{\mathbb{F}}
\global\long\def\11{\mathbbm{1}}
\begin{document}
\title{How to Compute Modulo Prime-Power Sums}

\author{
 \IEEEauthorblockN{Mohsen Heidari}
  \IEEEauthorblockA{EECS Department\\University of Michigan\\ Ann Arbor,USA \\
    Email: mohsenhd@umich.edu} 

  \and
  \IEEEauthorblockN{S. Sandeep Pradhan}
  \IEEEauthorblockA{EECS Department\\University of Michigan\\ Ann Arbor,USA \\
    Email: pradhanv@umich.edu}
}

%
%


\IEEEoverridecommandlockouts


\maketitle
\begin{abstract}
 The problem of computing modulo prime-power sums is investigated in distributed source coding as well as computation over Multiple-Access Channel (MAC). We build upon group codes and present a new class of codes called Quasi Group Codes (QGC). A QGC is a subset of a group code. These codes are not closed under the group addition.  We investigate some properties of QGC's, and  provide a packing and a covering bound. Next, we use these bounds to derived achievable rates for distributed source coding as well as computation over MAC. We show that strict improvements over the previously known schemes can be obtained using QGC's.
\end{abstract}

\section{Introduction}
\IEEEPARstart{E}{ver} since the seminal paper by  Korner and Marton  in 1979, structured codes played a 
key role in the study of  asymptotic performance of multi-terminal communications  \cite{korner-marton}-\cite{Arun_comp_over_MAC_ISIT13}. 
In all of these works, algebraic structure of the codes is exploited to derive new bounds on the asymptotic 
performance limits of communication.
These bounds are strictly better than those derived using unstructured codes.  
Most of these works concentrate on linear codes built on finite fields. 
Despite the aforementioned benefits, the algebraic structure imposed by linear codes has certain restrictions.  
Finite fields exist only when the alphabet size is a prime power. Even when the existence is not an issue, in 
certain problems, weaker algebraic structures such as groups have better properties \cite{Loeliger-91}.  
Group codes are a type of structured codes that are closed under the group operation. These codes have 
been studied in \cite{Loeliger-91}- \cite{Aria_group_codes} for point-to-point (PtP) 
communication problems. Under specific constraints in multi-terminal settings, compared to linear codes, the structure 
of group codes matches better with that of the channel or source. This results in achieving lower transmission rates in 
certain distributed source coding problems  \cite{Aria_dist_source}  and 
higher transmission rates for certain broadcast channels  \cite{Aron-BC-ISIT13}.

When the underlying group is not a field, there are non-trivial subgroups. Since group codes are closed under 
the group addition, these subgroups put a penalty on the transmission rates.  Based on this observation, 
in our earlier attempt,  we introduced a  class of structured codes called transversal group codes \cite{transversal}. 
These codes are built over cyclic groups. In contrast to group codes, they are not closed under the group 
addition. This allows the transversal group codes to compensate for the penalty put by  subgroups and achieve higher/lower 
transmission rates in channel/source coding problems. In particular, these codes extend the asymptotic rate 
region achievable in distributed source coding as well as computation over MAC. 

In this paper, we extend the notion of transversal group codes and introduce a new class of  codes over groups 
called Quasi Group Codes (QGC). These codes are constructed by taking subsets of group codes. 
We restrict  ourselves to cyclic groups and  provide a  construction of the subsets. We first study some basic properties of 
QGC's and derive a packing and a covering bound for such codes. These bounds indicate that the PtP 
channel capacity and optimal rate-distortion function is achievable using QGC's. Next, we use these results 
to explore the applications of QGC's in multi-terminal communication problems. We derive achievable 
rates using QGC's for certain distributed source coding and computation over MAC problems. 
We show, through some examples, that these codes give better achievable rates for both settings. 

The rest of this paper is organized as follows: Section \ref{sec: preliminaries} provides the preliminaries and notations. In Section \ref{sec: proposed scheme} we propose QGC's and investigate some of their properties. In Section  \ref{sec: dist} and Section \ref{sec: comp_over_mac}, we discuss the applications of QGC's in distributed source coding and computation over MAC, respectively. Section \ref{sec: conclusion} concludes the paper.

\section{Preliminaries} \label{sec: preliminaries}
\subsection{Notations}
We denote (i) vectors using lowercase bold letters such as $\mathbf{b}, \mathbf{u}$, (ii) matrices using uppercase bold letters such as $\mathbf{G}$, (iii) random variables using capital letters such as $X,Y$, (iv) numbers, realizations of random variables and elements of sets using lower case letters such as $a, x$. Calligraphic letters such as $\mathcal{C}$ and  $\mathcal{U}$  are used to represent sets. For shorthand, we denote the set $\{1, 2, \dots, m\}$ by $[1:m]$. 

\subsection{Definitions}
A group is a set equipped with a binary operation denoted by “$+$”. Given a prime power $p^r$, the group of integers modulo-$p^r$ is denoted by $\ZZ_{p^r}$, where the underlying set is $\{0,1,\cdots, p^r-1\}$, and the addition  is modulo-$p^r$. For $s \in \{0,1,\cdots, r\}$, define $$H_{s}=p^{s}\ZZ_{p^r}=\{0, p^{s}, 2p^{s}, \cdots,  (p^{r-s}-1)p^{s}\},$$ and $T_s=\{0,1, \cdots
 , p^s-1\}$. For example, $H_0=\ZZ_{p^r}, T_0=\{0\}$, whereas $H_r=\{0\}, T_r=\ZZ_{p^r}$. Note, $H_s$ is a subset of $\ZZ_{p^r}$ that is closed under the modulo-$p^r$ addition. Given $H_{s}$ and $T_s$, each element $a$ of $\ZZ_{p^r}$ can be represented uniquely as a sum $a=t+h$, where $h\in H_{s}$ and $t \in T_s$. We denote such $t$ by $[a]_{s}$. 
 
For any elements $a,b \in \ZZ_{p^r}$, we define the multiplication $a\cdot b$ by adding $a$ with itself $b$ times. Given a positive integer $n$,  denote $\ZZ_{p^r}^n=\bigotimes_{i=1}^n\ZZ_{p^r}$. Note $\ZZ_{p^r}^n$ is a group, whose addition is element-wise and its underlying set is $\{0,1, \dots, p^r-1\}^n$.   
\begin{definition}[Shifted Group Codes]\label{def: group codes}
A \textit{group code} $\mathcal{C}$ over $\ZZ_{p^r}$ with length $n$ is a subgroup of $\ZZ_{p^r}^n$. A \textit{shifted group code}  over $\ZZ_{p^r}$ is a translation of a group code $\mathcal{C}$ by a fixed element $\mathbf{b}\in \ZZ_{p^r}^n$.   
\end{definition}
%
Shifted group codes, in general, are defined over arbitrary groups. Sahebi, \textit{et al,} \cite{Aria_group_codes}, characterized the ensemble of all group codes over finite commutative groups. 
%
\begin{definition}[Transversal Group Codes]
Consider non-negative integers $n, k_1, k_2, \dots, k_r$.  An $(n,k_1,k_2, \dots, k_r)$-transversal group code over $\ZZ_{p^r}$ is defined as 
$$\mathcal{C}=\{ \sum_{s=1}^r \mathbf{u}_s \mathbf{G}_s+\mathbf{b}: \mathbf{u}_s \in T_s^{k_s}, s\in [1:r]\},$$ where  $\mathbf{b}\in \ZZ^n_{p^r}$ and $\mathbf{G}_s$ is a $k_s \times n$ matrix with elements in $\ZZ_{p^r}$.
\end{definition}
Suppose that the elements of $\mathbf{G}_s$ and $\mathbf{b}$ are selected randomly and uniformly over $\ZZ_{p^r}$. Then, for large enough $n$, with probability close to one, the rate of this code equals $$R=\frac{1}{n}\log_2 |\mathcal{C}| =\sum_{s=1}^r \frac{k_s}{n}\log_2|T_s|=\sum_{s=1}^r \frac{k_s}{n}\log_2p^s$$
Performance limits of transversal codes for point-to-point as well as certain multi-terminal problems are investigated in \cite{transversal}.

Consider a two user MAC whose input alphabets at each terminal is $\ZZ_{p^r}$, and its output alphabet is denoted by $\mathcal{Y}$.
\begin{definition} [Codes for computation over MAC]\label{def: code for comp over MAC}
 A $(\theta_1, \theta_2)$-code for computation over the above MAC consists of two encoding functions and one decoding function. The encoding functions are denoted by $f_i:[1:\theta_i]\rightarrow \ZZ_{p^r}^n$, for $i=1,2$, and the decoding function is a map $g: \mathcal{Y}^n \rightarrow \ZZ_{p^r}^n$.
\end{definition}

\begin{definition}[Achievable Rate] \label{def: comp over MAC achievable rate}
$(R_1,R_2)$ is said to be achievable, if for any $\epsilon >0$, there exist a $(\theta_1, \theta_2)$-code such that  
\begin{align*}
&P\{g(Y^n)\neq f_1(M_1)+f_2(M_2)\}\leq \epsilon, \quad 
R_i \leq \frac{1}{n}\log \theta_i,
\end{align*}
where $M_1$ and $M_2$ are independent random variables and $p(M_i=m_i)=\frac{1}{\theta_i}$ for all $ m_i \in [1:\theta_i], i=1,2$.
\end{definition}

\section{Quasi Group Codes}\label{sec: proposed scheme}
 A linear code over a field $\FF_p$ is defined as a subspace of $\FF_p^n$. This code can also be viewed as the image of a linear transformation from $\FF_p^k$ into $\FF_p^n$.  Similarly, a shifted group code over $\ZZ_{p^r}$ (as in Definition \ref{def: group codes}) is the image of an \textit{addition-preserving} map from $\ZZ_{p^r}^k$ into $\ZZ_{p^r}^n$. This map is denoted by $\phi(\mathbf{u})=\mathbf{u}\mathbf{G}+\mathbf{b}$, where $\mathbf{b}\in \ZZ_{p^r}^n$ and $\mathbf{G}$ is a $k\times n$ matrix whose elements are in $\ZZ_{p^r}$.

The idea to construct a  \textit{quasi group} code is to consider only a subset of a shifted group code. This can be done by restricting the domain of $\phi$ to a subset $\mathcal{U}$ of  $\ZZ_{p^r}^k$. Therefore, a QGC is defined by 
$$ \mathcal{C}=\{\mathbf{u}\mathbf{G}+\mathbf{b}: \mathbf{u}\in \mathcal{U}\}, $$
where $\mathcal{U}$ is an arbitrary subset of $\ZZ_{p^r}^k$. For a general subset $\mathcal{U}$, the codebook $\mathcal{C}$ is not necessary closed under the addition. However, it possesses certain algebraic structures. Note that it is difficult to analyze the performance of QGC for a general $\mathcal{U}$. In what follows, we present a special construction of $\mathcal{U}$ that is suitable for tractability in analyzing the performance of the code.   

Let $U$ be a random variable over $\ZZ_{p^r}$ and set $\mathcal{U}=A_{\epsilon}^{(k)}(U)$. In this case, by changing the PMF of $U$, one can create different sets $\mathcal{U}$. For example, if $U$ is uniform over $\ZZ_{p^r}$, then $\mathcal{U}=\ZZ_{p^r}^k$, and $\mathcal{C}$ will become a shifted group code. 

Next, we provide a more general construction of $\mathcal{U}$. Fix $m$, and consider positive integers $k_i, i\in [1:m]$.  For each $i$, let $\mathbf{G}_i$ be a $k_i \times n$ matrix with elements in $\ZZ_{p^r}$. Suppose $U_1, U_2, \cdots, U_m$ are independent random variables over $ \ZZ_{p^r}$. As a codebook define 
\begin{align}\label{eq: codebook generation}
\mathcal{C}=\{ \sum_{i=1}^m \mathbf{u}_i\mathbf{G}_i+\mathbf{b}~|~ \mathbf{u}_i\in A_{\epsilon}^{(k_i)}(U_i), i\in [1:m] \},
\end{align}
where $\mathbf{b}\in \ZZ_{p^r}^n$. Note, in this case, we consider $\mathcal{U}$ as a Cartesian product of the typical sets of $U_i$, i.e., $\mathcal{U}=\bigotimes_{i=1}^m A_{\epsilon}^{(k_i)}(U_i).$ 

\begin{definition}\label{def: QGC}
An  $(n,m, k_1,k_2, \dots, k_m)$ QGC over $\ZZ_{p^r}$ is defined as in (\ref{eq: codebook generation}) and is characterized by a translation $\mathbf{b}\in \ZZ^n_{p^r}$, random variables $U_i$ over $\ZZ_{p^r}$  and $k_i \times n$ matrices $\mathbf{G}_i$, where $i\in [1:m]$. 
\end{definition}
\begin{remark}
 Any group code and any transversal group code over $\ZZ_{p^r}$ is a QGC.
\end{remark}

Fix $n,m, k_1,k_2, \dots, k_m$ and random variables $U_i, i\in[1:m]$. We create an ensemble by taking the collection of all $(n, k_1,k_2, \dots, k_m)$ quasi group codes with random variables $U_i$,  for all matrices $\mathbf{G}_i$ and translations $\mathbf{b}$. A random codebook $\mathcal{C}$, from this ensemble, is chosen by selecting the elements of $\mathbf{G}_i, i\in[1:m]$ and $\mathbf{b}$ randomly and uniformly from $\ZZ_{p^r}$. For large enough $n$, with probability close to one the rate of this code is
\begin{align}\label{eq: QGC_rate}
R=\frac{1}{n}\log_2|\mathcal{C}| = \sum_{i=1}^m \frac{k_i}{n}H(U_i).
\end{align}

\begin{remark}
Let $\mathcal{C}$ be a randomly selected QGC as in the above. In contrast to linear codes, codewords of $\mathcal{C}$ are not pairwise independent. 
\end{remark}
We use a different notation to simplify (\ref{eq: QGC_rate}). Let $k=\sum_{i=1}^m k_i$. Denote $q_i=\frac{k_i}{k}$. Since $q_i\geq 0$ and $\sum_i q_i=1$, we can define a random variable $Q$ with $P(Q=i)=q_i$. Define a random variable $U$ with  the conditional distribution $P(U=a|Q=i)=P(U_i=a)$ for all $a\in \ZZ_{p^r}, i\in [1:m]$. Therefore, (\ref{eq: QGC_rate}) is simplified to 
\begin{equation}\label{eq: QGC rate simplified}
R=  \frac{k}{n}H(U|Q).
\end{equation}
 
\begin{remark}
The map induced by the matrices $\mathbf{G}_i$ and the translation $\bf b$ is injective, with high probability, if 
$$ \frac{k}{n} H(U|Q,[U]_s) \leq (r-s)\log_2p,$$ where  $0\leq s\leq r-1.$ Therefore, it is possible to have an injective map for a QGC when $k>n$.
\end{remark} 
 

\subsection{Unionized Quasi Group Codes }\label{subsec: UQGC}

Note that a randomly generated QGC has uniform distribution over the group $\ZZ_{p^r}$. However, in many communication setups we require application of codes with non-uniform distributions. In the case of group codes, this problem is resolved by constructing a group code first, then the union of different shifts of this group code is considered as the codebook. In other words, a large codebook is binned, where the bins themselves are required to possess a group structure \cite{Aria_group_codes}. This new codebook is called a unionized group code. Dual to this codebook construction method, we design a new ensemble of codes. The new codes are called Unionized Quasi Group Codes (UQGC).

A UQGC consists of an inner code and an outer code.  Suppose $\mathcal{C}_{in}$ is a $(n, m, k_1, \dots, k_m)$ QGC with translation $\mathbf{b}$, random variables $U_i$ and matrices $\mathbf{G}_i, i\in [1:m]$. We use $\mathcal{C}_{in}$ as the inner code. 
Given a positive integer $l$, consider a map $t : [1:l]\rightarrow \ZZ_{p^r}^n$. Define the outer code as 
\begin{align}\label{eq: UQGC}
\mathcal{C}_{out}=\bigcup_{j\in [1: l]} (\mathcal{C}_{in}+t(j))
\end{align}

\begin{definition}
Let $\mathcal{C}_{in}$ be an $(n,m, k_1, \dots, k_m)$ QGC. An $(n, m, l, k_1,k_2, \dots, k_m)$ UQGC over $\ZZ_{p^r}$ is defined as in (\ref{eq: UQGC}) and is characterized by $\mathcal{C}_{in}$ as the inner code and a  mapping $t : [1: l] \rightarrow \ZZ_{p^r}^n$. 
\end{definition}

\subsection{Properties of Quasi Group Codes}
It is known that if $\mathcal{C}$ is a random unstructured codebook, then $|\mathcal{C}+\mathcal{C}|\approx|\mathcal{C}|^2$ with high probability. Group codes on the other hand are closed under the addition, which means $|\mathcal{C}+\mathcal{C}|=|\mathcal{C}|$. Comparing to unstructured codes, when the structure of the group codes matches with that of a multi-terminal channel/source coding problem, higher/lower transmission rates are obtained. However, in certain problems, the structure of the group codes is too restrictive. More precisely, when the underlying group is $\ZZ_{p^r}$ for $r\geq 2$, there are several nontrivial subgroups. These subgroups cause a penalty on the rate of a group code. This results in lower transmission rates in channel coding and higher transmission rates in source coding. 

Quasi group codes balance the trade-off between the structure of the group codes and that of the unstructured codes. More precisely, when $\mathcal{C}$ is a QGC, then $|\mathcal{C}+\mathcal{C}|$ is a number between $|\mathcal{C}|$ and $|\mathcal{C}|^2$. This results in a more flexible algebraic structure to match better with the structure of the channel or source. This trade-off is shown more precisely in the following lemma. 

\begin{lem}\label{lem:sum of two quasi group code}
Let $\mathcal{C}$ and $\mathcal{C}'$ be two  $(n,m, k_1, \dots, k_m)$ QGC with random variables $U_i$ and $U'_i, i\in [1:m]$, respectively. Suppose  $\mathcal{C}$ and $\mathcal{C}'$ have identical matrices and translation with elements chosen randomly and uniformly over $\ZZ_{p^r}$. Then for large enough $n$, with probability one, the followings hold:
\begin{enumerate}
\item$\mathcal{C}+a \mathcal{C}'$ is a  $(n,m, k_1, \dots, k_m)$ QGC with random variables $U_i+aU'_i$,
\item $ \max\{| \mathcal{C}|, |a \mathcal{C}'| \} \leq |\mathcal{C}+a \mathcal{C}'| \leq \min\{p^{rn}, |\mathcal{C}| \cdot | a \mathcal{C}'|\} $,
\end{enumerate}
where $a\in \ZZ_{p^r}$ is arbitrary.
\end{lem}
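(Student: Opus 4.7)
The plan is to exploit the hypothesis that $\mathcal{C}$ and $\mathcal{C}'$ share the matrices $\mathbf{G}_i$ and translation $\mathbf{b}$ in order to rewrite $\mathcal{C}+a\mathcal{C}'$ in closed algebraic form. Starting from the two defining representations in (\ref{eq: codebook generation}), I would compute
\begin{align*}
\mathcal{C}+a\mathcal{C}' = \Big\{\sum_{i=1}^m(\mathbf{u}_i+a\mathbf{u}'_i)\mathbf{G}_i+(1+a)\mathbf{b}\,:\, \mathbf{u}_i\in A_\epsilon^{(k_i)}(U_i),\ \mathbf{u}'_i\in A_\epsilon^{(k_i)}(U'_i)\Big\},
\end{align*}
by distributing the scalar multiplication by $a\in\ZZ_{p^r}$ through the QGC description of $\mathcal{C}'$ and using that both codes have identical generators. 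The core step for claim (1) is then to show that as $\mathbf{u}_i$ and $\mathbf{u}'_i$ independently range over $A_\epsilon^{(k_i)}(U_i)$ and $A_\epsilon^{(k_i)}(U'_i)$, the componentwise combination $\mathbf{v}_i:=\mathbf{u}_i+a\mathbf{u}'_i$ ranges over (essentially) the typical set $A_{\epsilon'}^{(k_i)}(U_i+aU'_i)$ for some $\epsilon'$ vanishing with $\epsilon$. Since $U_i$ and $U'_i$ are independent (the codebooks are generated separately from the ensemble), the standard conditional-typicality lemma applied to the deterministic map $(u,u')\mapsto u+au'$ suffices. This identifies $\mathcal{C}+a\mathcal{C}'$ as an $(n,m,k_1,\dots,k_m)$ QGC with random variables $U_i+aU'_i$, matrices $\mathbf{G}_i$, and translation $(1+a)\mathbf{b}$.

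For claim (2) the bounds are almost purely set-theoretic. The inclusion $\mathcal{C}+a\mathcal{C}'\subseteq \ZZ_{p^r}^n$ immediately gives $|\mathcal{C}+a\mathcal{C}'|\leq p^{rn}$, and the Minkowski sum contains at most $|\mathcal{C}|\cdot|a\mathcal{C}'|$ distinct elements since every output comes from at least one pair $(\mathbf{c},\mathbf{c}')\in\mathcal{C}\times a\mathcal{C}'$; together these give the upper bound. For the lower bound, fix any $\mathbf{c}'\in a\mathcal{C}'$. The translated copy $\mathcal{C}+\mathbf{c}'$ sits inside $\mathcal{C}+a\mathcal{C}'$ and, because translation is a bijection, has cardinality $|\mathcal{C}|$; therefore $|\mathcal{C}+a\mathcal{C}'|\geq |\mathcal{C}|$. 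A symmetric argument with a fixed $\mathbf{c}\in\mathcal{C}$ gives $|\mathcal{C}+a\mathcal{C}'|\geq |a\mathcal{C}'|$, and the maximum of the two yields the stated bound.

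The main obstacle will be the typical-set arithmetic underlying claim (1): the image of $A_\epsilon^{(k_i)}(U_i)\times A_\epsilon^{(k_i)}(U'_i)$ under componentwise combination is not exactly $A_\epsilon^{(k_i)}(U_i+aU'_i)$ but only an $\epsilon$-close approximation of it, so one must absorb a small widening of the typicality parameter into the resulting QGC description. Since the rate formula (\ref{eq: QGC rate simplified}) is continuous in $\epsilon$, this causes no material loss. The qualifier ``with probability one'' is then justified by invoking the same concentration argument that was used to establish (\ref{eq: QGC rate simplified}) for the constituent codes, applied to the combined random variables $U_i+aU'_i$ and the uniformly random choice of $\mathbf{G}_i,\mathbf{b}$ over $\ZZ_{p^r}$.
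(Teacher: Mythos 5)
Your argument for part (2) is correct and is exactly the ``standard arithmetic combinatorics'' argument the paper leaves implicit: $\mathcal{C}+a\mathcal{C}'\subseteq \ZZ_{p^r}^n$, every element of the sum arises from at least one pair in $\mathcal{C}\times a\mathcal{C}'$, and the sum contains a translate of $\mathcal{C}$ (and of $a\mathcal{C}'$), which gives both bounds; the paper's own proof spends its effort instead on exhibiting choices of $U_i,U'_i$ attaining the two extremes, which is not needed for the stated inequalities.

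The gap is in your core step for part (1). The rewriting $\mathcal{C}+a\mathcal{C}'=\{\sum_{i}(\mathbf{u}_i+a\mathbf{u}'_i)\mathbf{G}_i+(1+a)\mathbf{b}\}$ is essentially all the paper does (it reads the claim off Definition~\ref{def: QGC}, viewing the sum code as generated by the same matrices with combined inputs whose per-letter law is that of $U_i+aU'_i$). You, however, go further and assert that $\{\mathbf{u}_i+a\mathbf{u}'_i:\mathbf{u}_i\in A_\epsilon^{(k_i)}(U_i),\ \mathbf{u}'_i\in A_\epsilon^{(k_i)}(U'_i)\}$ is essentially $A_{\epsilon'}^{(k_i)}(U_i+aU'_i)$ and that the conditional-typicality lemma proves this. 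It does not: that lemma only gives the inclusion of $A_{\epsilon'}^{(k_i)}(U_i+aU'_i)$ in the sumset (a typical sum sequence can be split into a jointly typical pair), whereas your map is applied to \emph{all} pairs of marginally typical sequences, most of which are not jointly typical and whose sums are not typical for $U_i+aU'_i$. Concretely, take $p^r=4$, $a=1$, and $U_i,U'_i$ both uniform over $\{0,1\}$: choosing $\mathbf{u}'_i=\mathbf{u}_i$ yields sums supported on $\{0,2\}$, far from the law $(1/4,1/2,1/4)$ of $U_i+U'_i$; in fact the sumset contains roughly $2^{k_i\log_2 3}$ sequences (all those with approximately equal numbers of $0$'s and $2$'s) versus $|A_{\epsilon'}^{(k_i)}(U_i+U'_i)|\approx 2^{1.5 k_i}$, an exponential discrepancy that no widening of $\epsilon$ can absorb. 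So the two-sided identification you rely on is false in general; what can be salvaged (and what the paper actually uses later, e.g.\ in the proofs of Theorems~\ref{them: distributed source coding} and~\ref{thm: com_over_mac}, where the decoder searches only over $\tilde{\mathbf{w}}_i\in A_\epsilon^{(k_i)}(U_i+V_i)$) is the one-sided containment: the QGC with random variables $U_i+aU'_i$, the same matrices, and translation $(1+a)\mathbf{b}$ is essentially contained in $\mathcal{C}+a\mathcal{C}'$.
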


\begin{proof}
The first statement follows from Definition \ref{def: QGC}. For the second statement, strict inequalities follow from standard \textit{arithmetic combinatorics} arguments. Equality in the left-hand side holds, if $U_i$ and $U'_i$ are uniform over $\ZZ_{p^r}$. As for the right-hand side equality, let $U_i$ be uniform over $\{0,2\}$, and $U'_i$ be uniform over $\{0,1\}$. Take $a =1$. Then $U_i+aU'_i$ is uniform over $\ZZ_4$, and  $H(U_i+aU'_i)=2$. Note $H(U_i)=H(U'_i)=1$. Hence, using (\ref{eq: QGC_rate}) the right-hand-side equality holds.  
\end{proof}

In what follows, we derive a packing and a covering bound for a QGC with matrices and translation chosen randomly and uniformly. Fix a PMF $p(x,y)$, and suppose an $\epsilon$-typical sequence $\mathbf{y}$ is given with respect to the marginal distribution $p(y)$.  Consider the set of all codewords that are jointly typical with $\mathbf{y}$ with respect to $p(x,y)$. In the packing lemma, we characterize the conditions in which the probability of this set is small. This implies the existence of a  ``good-channel" code which is also a QGC. In the covering lemma, we derive the conditions for which, with high probability, there exists at least one such codeword. In this case a ``good-source" code exists which is also a QGC. These conditions are provided in the next two lemmas. 

Let $\mathcal{C}$ be a $(n,m, k_1 ,k_2 , \dots, k_m )$ QGC with random variables $U_i$. Suppose the generator matrices and the translation vector of $\mathcal{C}$ are chosen randomly and uniformly over $\ZZ_{p^r}$. Index codewords of $\mathcal{C}$ by $\theta\in [1:|\mathcal{C}|]$. By $\mathbf{c}(\theta)$ denote the $\theta$th codeword of $\mathcal{C}$. Define random variables $Q$ and $U$ as in (\ref{eq: QGC rate simplified}), i.e., $P(Q=i)=\frac{k_i}{\sum_i k_i}$ and $P(U=a|Q=i)=P(U_i=a)$, for all $a\in \ZZ_{p^r}, i\in [1:m]$. Let $\mathcal{C}_{out}$ be a $(n,l,  m, k_1, \dots, k_m)$ UQGC with $l=2^{nR_{bin}}$, $\mathcal{C}$ as an inner code and a map $t:[1:2^{nR_{bin}}]\rightarrow \ZZ_{p^r}^n$ which is selected randomly and uniformly. We use these notations in the following lemmas.

\begin{lem}[Packing]\label{lem: packing}
 Let $(X,Y)\sim p(x,y)$, where $X$ is uniform over $\ZZ_{p^r}$. Fix $\theta \in [1:|\mathcal{C}|]$. Let $\tilde{\mathbf{Y}}^n$ be a random sequence distributed according to $\prod_{i=1}^n p(\tilde{y}_i|c_{i}(\theta))$. Suppose, conditioned on $\mathbf{c}(\theta)$, $\tilde{\mathbf{Y}}^n$ is independent of any other codewords in $\mathcal{C}$. Then, as $n\rightarrow \infty$,  $P\{\exists \mathbf{x}\in \mathcal{C}: (\mathbf{x}, \tilde{\mathbf{Y}}^n)\in A_{\epsilon}^{(n)} (X,Y), \mathbf{x}\neq \mathbf{c}(\theta)\}$ is arbitrary close to zero, if
 \begin{align}\label{eq: packing bound}
R < \min_{0 \leq s\leq r-1} \frac{H(U|Q)}{H(U|Q,[U]_s)}\big( \log_2p^{r-s}-H(X|Y[X]_s) \big).
\end{align}
\end{lem}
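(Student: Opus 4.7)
The plan is to follow a Gallager-style union bound, partitioned according to an algebraic ``type'' assigned to each codeword pair. For $\theta' \neq \theta$ set $\mathbf{v}_i = \mathbf{u}'_i - \mathbf{u}_i \in \ZZ_{p^r}^{k_i}$ and define $s(\theta, \theta') \in \{0, 1, \ldots, r-1\}$ to be the largest $s$ such that every entry of every $\mathbf{v}_i$ lies in $H_s = p^s \ZZ_{p^r}$. Splitting by $s$,
\[
P\bigl\{\exists \mathbf{x} \in \mathcal{C}\setminus\{\mathbf{c}(\theta)\} : (\mathbf{x}, \tilde{\mathbf{Y}}^n) \in A_\epsilon^{(n)}(X,Y)\bigr\} \le \sum_{s=0}^{r-1} \sum_{\theta' : s(\theta,\theta')=s} P\bigl\{(\mathbf{c}(\theta'), \tilde{\mathbf{Y}}^n) \in A_\epsilon^{(n)}(X,Y)\bigr\}.
\]

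For each individual pair, I would first argue that, since $\mathbf{b}$ is uniform on $\ZZ_{p^r}^n$ and independent of the $\mathbf{G}_i$'s, $\mathbf{c}(\theta)$ is marginally uniform and, conditional on $\mathbf{c}(\theta)$, the difference $\mathbf{c}(\theta') - \mathbf{c}(\theta) = \sum_i \mathbf{v}_i \mathbf{G}_i$ remains uniform on $H_s^n$. This reduces to the group-theoretic fact that, for a random matrix $\mathbf{G}_i$ with i.i.d.\ uniform entries over $\ZZ_{p^r}$, $\mathbf{v}_i \mathbf{G}_i$ is uniform on $H_{s_i}^n$ where $p^{s_i} = \gcd(\mathbf{v}_i, p^r)$, and that an independent sum of uniforms on $H_{s_i}^n$ is uniform on $H_{\min_i s_i}^n$. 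Conditioning on $(\mathbf{c}(\theta), \tilde{\mathbf{Y}}^n)$ being jointly typical (a high-probability event, since $\mathbf{c}(\theta)$ is uniform on $\ZZ_{p^r}^n$ and $X$ is uniform), each pair probability is at most $2^{-n(\log_2 p^{r-s} - H(X|Y,[X]_s) - \delta(\epsilon))}$, because the coset $\mathbf{c}(\theta) + H_s^n$ has size $p^{(r-s)n}$ and contains at most $2^{n(H(X|Y,[X]_s) + \delta(\epsilon))}$ sequences jointly typical with $\tilde{\mathbf{Y}}^n$ (the uniformity of $X$ kills the $P([X]_s)$ term when averaging over cosets).

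For the enumeration step, I would upper bound $|\{\theta' : s(\theta, \theta') \ge s\}|$ by counting tuples $(\mathbf{u}'_i)_i$ with $\mathbf{u}'_i \in A_\epsilon^{(k_i)}(U_i)$ and $[\mathbf{u}'_i]_s = [\mathbf{u}_i]_s$ for every $i$. By conditional typicality applied to the function $U_i \mapsto [U_i]_s$, each factor is at most $2^{k_i(H(U_i|[U_i]_s) + \delta(\epsilon))}$, and the product becomes $2^{k(H(U|Q,[U]_s) + \delta(\epsilon))}$ in the $(Q,U)$ notation of (\ref{eq: QGC rate simplified}). Combining with the per-pair bound and using the rate identity $R = \tfrac{k}{n} H(U|Q)$, the exponent associated with category $s$ is
\[
\frac{R \cdot H(U|Q,[U]_s)}{H(U|Q)} - \log_2 p^{r-s} + H(X|Y,[X]_s) + 2\delta(\epsilon),
\]
which is strictly negative for every $s \in \{0, \ldots, r-1\}$ exactly under hypothesis (\ref{eq: packing bound}); letting $n \to \infty$ and then $\epsilon \to 0$ concludes the argument.

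The main technical obstacle is reconciling the fact that QGC codewords are not pairwise independent with the restriction that the indexing set $\mathcal{U}$ consists of typical sequences rather than all of $\ZZ_{p^r}^k$. The $p$-adic subgroup structure of $\ZZ_{p^r}$ captures the pairwise dependence cleanly through the type $s(\theta, \theta')$, but one must verify carefully that the conditional typicality count inside each typical set $A_\epsilon^{(k_i)}(U_i)$ gives exactly $2^{k_i H(U_i|[U_i]_s)}$ (neither a uniform bound $|H_{r-s}|^{k_i}$ nor a coarser typicality count); it is precisely this step that produces the ratio $\frac{H(U|Q)}{H(U|Q,[U]_s)}$ in the final rate expression and distinguishes the QGC packing bound from the classical linear-coding one.
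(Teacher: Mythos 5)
Your proposal is correct and follows essentially the same route as the paper's proof: a union bound partitioned by the subgroup valuation $s$ of the index difference (the shells $H_s^k\setminus H_{s+1}^k$), the fact that $\sum_i \mathbf{v}_i\mathbf{G}_i$ is uniform on $H_s^n$ and independent of $\mathbf{c}(\theta)$ thanks to the uniform translation $\mathbf{b}$ (the paper's Lemma \ref{lem: P(phi)}, which you rederive inline via the gcd argument), the coset-intersection typicality count $|A_\epsilon^{(n)}(X|\mathbf{y})\cap(\mathbf{x}_0+H_s^n)|\approx 2^{nH(X|Y[X]_s)}$ together with the componentwise count $|\mathcal{U}\cap(\mathbf{u}_0+H_s^k)|\approx 2^{kH(U|Q[U]_s)}$ (the paper's Lemma \ref{lem: typical set intersection subgroup}), and finally the rate identity $R=\tfrac{k}{n}H(U|Q)$ to produce the ratio in (\ref{eq: packing bound}). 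The only minor bookkeeping difference (splitting off the atypical $(\mathbf{c}(\theta),\tilde{\mathbf{Y}}^n)$ event before the union bound) is at the same level of rigor as the paper's own treatment.
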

\begin{proof}
See Appendix \ref{sec: proof of the packing lemma}.
\end{proof}

\begin{lem}[Covering]\label{lem: covering}
Let $(X,\hat{X})\sim p(x,\hat{x})$,  where $\hat{X}$ is uniform over $\ZZ_{p^r}$.  Let $\mathbf{X}^n$ be a random sequence distributed according to $\prod_{i=1}^n p(x_i)$. Then, as $n \rightarrow \infty$, $ P\{ \exists \hat{\mathbf{x}} \in \mathcal{C}_{out}: (\mathbf{X}^n,\mathbf{\hat{x}})\in A_{\epsilon}^{(n)} (X,\hat{X})\} $
is arbitrary close to one, if 
\begin{align}\label{eq: covering bound}
R_{bin} + \frac{H([U]_s|Q)}{H(U|Q)} R > \log_2 p^s-H([\hat{X}]_s|X)
\end{align}
holds for $1\leq s \leq r$.	
\end{lem}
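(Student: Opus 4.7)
The plan is to apply the second-moment method to
$$N=\sum_{j=1}^{l}\sum_{\mathbf{u}}\mathbbm{1}\bigl\{(\mathbf{X}^n,\mathbf{c}(\mathbf{u},j))\in A_\epsilon^{(n)}(X,\hat X)\bigr\},$$
where $\mathbf{c}(\mathbf{u},j)=\sum_{i=1}^m\mathbf{u}_i\mathbf{G}_i+\mathbf{b}+t(j)$ and the inner sum ranges over $\mathbf{u}\in\bigotimes_i A_\epsilon^{(k_i)}(U_i)$. Chebyshev's inequality gives $P(N=0)\le\mathrm{Var}(N)/\EE[N]^2$, so the lemma reduces to showing $\EE[N]\to\infty$ and $\EE[N^2]=(1+o(1))\EE[N]^2$.

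\textbf{First moment.} Uniformity of $\mathbf{b}$ and its independence from $\mathbf{G}_i$ and $t(\cdot)$ make each codeword marginally uniform on $\ZZ_{p^r}^n$. Since $\hat X$ is uniform, the conditional joint-typicality probability of any one codeword given typical $\mathbf{X}^n$ is $\doteq 2^{-n(\log p^r-H(\hat X|X))}$. Multiplying by $|\mathcal{C}_{out}|\doteq 2^{n(R+R_{bin})}$ gives $\EE[N]\doteq 2^{n(R+R_{bin}-\log p^r+H(\hat X|X))}$, which diverges precisely under the $s=r$ instance of (\ref{eq: covering bound}), using $[U]_r=U$ and $[\hat X]_r=\hat X$.

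\textbf{Second moment --- the main obstacle.} I would split $\EE[N^2]$ over codeword pairs. Cross-bin pairs $j\ne j'$ are independent (since $t(j)-t(j')$ is uniform) and together contribute $(1+o(1))\EE[N]^2$, while the diagonal yields $\EE[N]=o(\EE[N]^2)$. The delicate case is same-bin with $\mathbf{v}\triangleq\mathbf{u}-\mathbf{u}'\ne 0$; here $\mathbf{c}-\mathbf{c}'=\sum_i\mathbf{v}_i\mathbf{G}_i$, whose law is governed by the algebraic ``type'' $s\in\{0,1,\dots,r-1\}$ of $\mathbf{v}$, defined as the largest integer with $p^s$ dividing every entry of $\mathbf{v}$. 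For type $s$, the image $\sum_i\mathbf{v}_i\mathbf{G}_i$ is uniform on $H_s^n$, hence
$$P(\mathbf{c}=\hat{\mathbf{x}}_1,\mathbf{c}'=\hat{\mathbf{x}}_2)=p^{-rn}\cdot p^{-(r-s)n}\cdot\mathbbm{1}\{\hat{\mathbf{x}}_2-\hat{\mathbf{x}}_1\in H_s^n\}.$$
The number of jointly typical pairs $(\hat{\mathbf{x}}_1,\hat{\mathbf{x}}_2)$ sharing a coset of $H_s^n$ is $\doteq 2^{n(2H(\hat X|X)-H([\hat X]_s|X))}$, and the number of type-$s$ pairs $(\mathbf{u},\mathbf{u}')$ is $\doteq 2^{nR}\cdot 2^{nR\,H(U|Q,[U]_s)/H(U|Q)}$ via conditional typicality of $\mathbf{u}'$ given $[\mathbf{u}']_s=[\mathbf{u}]_s$ combined with (\ref{eq: QGC rate simplified}).

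\textbf{Conclusion.} Invoking the identity $H(U|Q)-H(U|Q,[U]_s)=H([U]_s|Q)$, the type-$s$ contribution to $\EE[N^2]/\EE[N]^2$ reduces to
$$2^{-n\bigl(R_{bin}+R\,H([U]_s|Q)/H(U|Q)-\log p^s+H([\hat X]_s|X)\bigr)},$$
which vanishes as $n\to\infty$ exactly under (\ref{eq: covering bound}) for each $s\in\{1,\dots,r-1\}$; the $s=0$ type is absorbed by the cross-bin term. Hence $\EE[N^2]=(1+o(1))\EE[N]^2$, and Chebyshev yields $P(N\ge 1)\to 1$. The main technical hurdle is the type-by-type enumeration of $(\mathbf{u},\mathbf{u}')$ pairs and verifying that $\sum_i\mathbf{v}_i\mathbf{G}_i$ is uniform on $H_s^n$ under random $\mathbf{G}_i$ whenever $\mathbf{v}$ has type $s$.
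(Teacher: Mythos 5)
Your proposal is correct and follows essentially the same route as the paper's proof: a Chebyshev/second-moment argument on the count of jointly typical codewords, with cross-bin pairs handled by independence of $t(j),t(j')$, and same-bin pairs decomposed by the coset type of $\mathbf{u}-\mathbf{u}'$ in $H_s^k\setminus H_{s+1}^k$, using the uniformity of $\Phi(\mathbf{v})$ on $H_s^n$ and the coset-intersection typicality count, together with the identity $H(U|Q)-H(U|Q,[U]_s)=H([U]_s|Q)$. The only cosmetic difference is bookkeeping: you extract the $s=r$ condition from the first moment (and fold $s=0$ into the cross-bin/product-form terms), whereas the paper obtains it as the $s=r$ term of the variance sum and dismisses $s=0$ as trivial.
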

\begin{proof}
See Appendix \ref{sec: proof of the covering lemma}.
\end{proof}

\begin{remark}\label{rem: covering-paking} 
Using QGC's the symmetric channel capacity and symmetric rate-distortion function are achievable. To see this, set $R_{bin}=0, m=1$ and $U_1$ uniform over $\{0,1\}$.  
\end{remark}

One application of these lemmas is in PtP source coding and channel coding using quasi group codes.
\begin{lem}
Using UQGC's, the PtP channel capacity and rate-distortion function is achievable for channels and sources with alphabet sizes equal to a prime power.
\end{lem}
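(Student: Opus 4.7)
The idea is to deduce both statements by specializing Lemma~\ref{lem: packing} and Lemma~\ref{lem: covering} to a UQGC whose inner QGC is as simple as possible: take $m=1$ and $U_1$ uniform on $\{0,1\}\subset\ZZ_{p^r}$ (so the degenerate $Q$ drops out). For every $s\ge 1$ one then has $[U_1]_s=U_1$, hence $H(U_1\mid[U_1]_s)=0$ and $H([U_1]_s)=H(U_1)=1$; the packing coefficient $H(U\mid Q)/H(U\mid Q,[U]_s)$ blows up to $+\infty$ (making the $s\ge 1$ packing constraints vacuous), and the covering coefficient $H([U]_s\mid Q)/H(U\mid Q)$ equals $1$. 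The only nontrivial constraints left are the $s=0$ packing bound $R<\log_2 p^r-H(X\mid Y)$ and the $s=r$ covering bound $R+R_{bin}>\log_2 p^r-H(\hat X\mid X)$; the remaining covering inequalities $R+R_{bin}>\log_2 p^s-H([\hat X]_s\mid X)$ are implied by the $s=r$ one because $H([\hat X]_{s+1}\mid X)-H([\hat X]_s\mid X)\le\log_2 p$.

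For channel coding, fix a capacity-achieving $p^*(x)$ and realize it as the law of $g(V)$ for $V\sim\mathrm{Unif}(\ZZ_{p^r})$ and a deterministic $g$ (after rational approximation and, if $p^*$ is not the law of a bijective function of $V$, grouping $d$ channel uses into a super-symbol in $\ZZ_{p^r}^d$ and adjoining an independent dither $\tilde Z$ so that $V=(\hat X,\tilde Z)$ is uniform with $I(V;X\mid\hat X)=0$). Build $\mathcal{C}_{out}=\bigcup_{j=1}^{l}(\mathcal{C}_{in}+t(j))$ with $l=2^{nR_{bin}}$ and $|\mathcal{C}_{in}|=2^{nR}$; the encoder for message $j$ picks $\mathbf v\in\mathcal{C}_{in}+t(j)$ so that $g(\mathbf v)$ is $p^*$-typical and transmits $\mathbf x=g(\mathbf v)$. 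Lemma~\ref{lem: covering} (with degenerate source) forces $R>\log_2 p^r-H(p^*)$; Lemma~\ref{lem: packing} applied bin by bin under the joint law $p(v,y)=p^{-r}p(y\mid g(v))$, followed by a union bound over $l$ bins, forces $R+R_{bin}<I(V;Y)+\log_2 p^r-H(p^*)=I(X;Y)+\log_2 p^r-H(X)$. Subtracting yields $R_{bin}\to I(X;Y)$, and maximizing over $p^*$ gives the capacity $C$. The source-coding half is dual: $\mathcal{C}_{out}$ is used as the quantization codebook of rate $R+R_{bin}$; the encoder transmits the index of any $\mathbf v$ with $(X^n,g(\mathbf v))$ jointly $p(x)p^*(\hat x\mid x)$-typical, and the decoder outputs $g(\mathbf v)$. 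Lemma~\ref{lem: covering} with $\hat X\leftarrow V$ yields $R+R_{bin}>\log_2 p^r-H(V\mid X)$, which collapses via $H(V\mid X)=H(\hat X\mid X)+H(\tilde Z)$ and $H(\hat X)+H(\tilde Z)=\log_2 p^r$ to $R+R_{bin}>I(X;\hat X)$; minimizing over admissible test channels recovers $R(D)$.

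The main obstacle is handling a target PMF $p^*$ whose entries are not multiples of $p^{-r}$, which is needed for $p^*$ to be the law of a deterministic function of $V\sim\mathrm{Unif}(\ZZ_{p^r})$. The remedy is the blocking device above: by treating $d$ consecutive coordinates as one super-symbol in $\ZZ_{p^r}^d$, any $p^*$ is approximated with denominator $p^{rd}$, and a direct extension of the packing and covering lemmas to the product group $\ZZ_{p^r}^d$ (inheriting the same $H_s,T_s$-decomposition) closes the rate gap in the limit $d\to\infty$. A second subtlety is verifying that the union bound over bins in the packing step retains exponential tightness despite the non-uniform filtering (only $p^*$-typical candidates contribute to the error event), which follows from the large-deviation counting already embedded in the proofs of the two lemmas.
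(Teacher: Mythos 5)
Your preliminary reductions are fine and match the paper's own starting point ($m=1$, $U_1$ uniform over $\{0,1\}$, so that only the $s=0$ packing constraint and the $s=r$ covering constraint survive). The genuine problem is in the channel-coding half. You introduce the map $g$, the dither $\tilde Z$ and the super-symbols precisely so that the variable fed into Lemmas \ref{lem: packing} and \ref{lem: covering} is \emph{uniform} over $\ZZ_{p^r}$, but the two bounds you then quote do not follow under that uniform law: with $p(v,y)=p^{-r}p(y\mid g(v))$, Lemma \ref{lem: packing} gives $R+R_{bin}<\log_2 p^r-H(V\mid Y)=I(V;Y)=I(X;Y)$, \emph{not} $I(V;Y)+\log_2 p^r-H(p^*)$; and Lemma \ref{lem: covering} with $\hat X=V$ uniform gives only a vacuous constraint (almost every $\mathbf{v}\in\ZZ_{p^r}^n$ has $g(\mathbf{v})$ $p^*$-typical), not $R>\log_2 p^r-H(p^*)$. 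The two misstatements cancel, which is why your subtraction returns $I(X;Y)$, but the constraints as written would permit $R\approx\log_2 p^r-H(p^*)$ and $R+R_{bin}\approx\log_2 p^r-H(X\mid Y)>I(V;Y)$, and for such parameters the decoder you describe fails: the expected number of wrong-bin codewords jointly typical with $\mathbf{y}$ under the uniform-$V$ law is about $2^{n(R+R_{bin}-I(V;Y))}$, exponentially large. The shaping-style pair of bounds you quote is exactly what arises when the two lemmas are applied \emph{directly in the $X$-domain} with the non-uniform $p(x)$ on $\ZZ_{p^r}$ — which is what the paper does: the codeword itself must be $p(x)$-typical, the inner code absorbs the shaping cost $\log_2 p^r-H(X)$, and the packing budget is $\log_2 p^r-H(X\mid Y)$; no map $g$, no dither, no blocking. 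Either adopt that route, or keep your uniform-$V$ scheme but then drop the covering requirement (take $R$ negligible) and use the correct packing bound $R+R_{bin}<I(V;Y)$; as written, the argument is not a valid deduction from the lemmas.

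A second gap is that your treatment of distributions $p^*$ that are not $p^{-r}$-adic rests on ``a direct extension of the packing and covering lemmas to the product group $\ZZ_{p^r}^d$,'' which is asserted rather than proved. That group is not cyclic; the relevant cosets are products $\prod_i H_{s_i}$ with mixed indices, so the single-chain case analysis over $s\in[0:r]$ in the paper's proofs does not carry over verbatim, and this extension is itself nontrivial work. The paper's proof never needs it, because it never requires the target distribution to be realizable as a deterministic function of a uniform variable. Your source-coding half is closer to correct (you transmit the full index, so only Lemma \ref{lem: covering} is invoked), but note that $H(\hat X)+H(\tilde Z)=\log_2 p^r$ should read $H(\hat X)+H(\tilde Z\mid\hat X)=\log_2 p^r$ unless $p^*$ is uniform; the conclusion $R+R_{bin}>I(X;\hat X)$ survives with the corrected identity, again only modulo the unproved $\ZZ_{p^r}^d$ extension, whereas the paper sends just the bin index and uses the packing bound $R\le\log_2 p^r-H(\hat X)$ to let the decoder recover the codeword.
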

 \begin{proof}[Outline of the proof]
Consider a memoryless channel with input alphabet $\mathcal{X}$ and conditional distribution $p(y|x)$. Suppose $|\mathcal{X}|$ equals a prime power $p^r$. Fix a PMF $p(x)$ on $\mathcal{X}$, and set $l=2^{nR_{bin}}$. Let $\mathcal{C}_{out}$ be a $(n,l,m=1,k_1)$ UQGC with a mapping $t:[1:l]\rightarrow \ZZ_{p^r}^n$. Let $\mathcal{C}$ be a $(n,m=1,k_1) $ QGC with random variable $U_1$ which is uniform over $\{0,1\}$. We use $\mathcal{C}$ as the inner code for $\mathcal{C}_{out}$. 

 Upon receiving a message $j \in [1:l]$, the encoder finds $c \in \mathcal{C}$ such that $\mathbf{c}+t(j)$ is typical with respect to the PMF $p(x)$. If such $c$ is found, the encoder sends $\mathbf{c}+t(j)$ to the channel; otherwise an encoding error will be declared. Upon receiving $\mathbf{y}$ from the channel, the decoder finds $\tilde{\mathbf{c}} \in \mathcal{C}$ and  $\tilde{j}$ such that $\tilde{\mathbf{c}}+t(\tilde{j})$ is jointly typical with $\mathbf{y}$ with respect to $p(x)p(y|x)$. A decoding error occurs if no unique $\tilde{j}$ is found.  Note the effective transmission rate is $R_{bin}$.

 Let $R_{in}$ be the rate of $\mathcal{C}$. Then, using Lemma \ref{lem: covering}, the probability of the error at the encoder approaches zero, if $R_{in} \geq \log p^r-H(X)$. Using Lemma \ref{lem: packing}, we can show that the average probability of error at the decoder approaches zero, if $R_{in}+R_{bin} \leq \log p^r -H(X|Y)$. As a result the rate $R_{bin}\leq I(X;Y)$ is achievable. 
 
For the source coding problem, let $X\sim p(x)$ be a discrete memorelyss source. Suppose $d: \mathcal{X}\times \mathcal{\hat{X}}\rightarrow [0,+\infty)$ is a distortion function, where the reconstruction alphabet is $\mathcal{\hat{X}}=\ZZ_{p^r}$. Given a distortion level $D$, consider a random variable $\hat{X}$ such that $\EE\{d(X, \hat{X})\}\leq D$. 

Let $\mathbf{x}$ be a typical sequence from the source.  The encoder finds $\mathbf{c}\in \mathcal{C}$ and $j\in [1:l]$ such that $\mathbf{c}+t(j)$ is jointly typical with $\mathbf{x}$ with respect to $p(x)p(\hat{x}|x)$. Then it sends $j$.  If no such $\mathbf{c}$ and $j$ are found, an encoding error will be declared. Given $j$, the decoder finds $\tilde{\mathbf{c}}$ such that  $\tilde{\mathbf{c}}+t( j)$ is typical with respect to $p(\hat{x})$. An error occurs, if no unique codeword $\tilde{\mathbf{c}}$ is found. It can be shown that the encoding error approaches zero, if $R_{bin}+R_{in} \geq \log p^r-H(\hat{X}|X)$. Also the decoding error approaches zero, if $R_{in} \leq \log p^r -H(\hat{X})$. As a result the rate $R_{bin} \geq I(X;\hat{X})$ and distortion $D$ is achievable. 
 
\end{proof}    

Lemma \ref{lem:sum of two quasi group code},  \ref{lem: packing} and Lemma \ref{lem: covering} provide a tool to derive inner bounds for achievable rates using quasi group codes in multi-terminal channel coding and source coding problem. In the next two sections, we study applications of quasi group codes in distributed source coding as well as computation over MAC.

	\section{Distributed Source Coding} \label{sec: dist}
In this section, we consider a special distributed source coding problem. Suppose $X_1$ and $X_2$ are sources over $\ZZ_{p^r}$ with joint PMF $p(x_1,x_2)$. The $j$th encoder compresses $X_j$ and sends it to a central decoder. The decoder wishes to reconstruct $X_1+X_2$ losslessly. 

We use UQGC's to propose a coding strategy for this problem. We use two UQGC's with identical matrices, one for each encoder. As discussed in Subsection \ref{subsec: UQGC}, each UQGC consists of an inner code and an outer code. We select an outer code that is also a  "good-source" code. Consider the codebook created by the sum of the two inner codes. Since, the decoder wishes to reconstruct only $X_1+X_2$, we select the inner codes such that this codebook is also a "good-channel" code. In the following theorem, we characterize an achievable rate region for the above problem using UQGC's.

\begin{theorem}\label{them: distributed source coding}
Suppose $X_1$ and $X_2$ are a pair of sources over the group $\ZZ_{p^r}$.  Lossless reconstruction of $X_1+X_2$ is possible, if the following holds
\begin{align}\label{eq: achievable bounds dist}
R_i \geq \log_2p^r- \frac{H(W_i|Q)}{H(W|[W]_sQ)} (\log_2p^{(r-s)}-H(X|[X]_s)),
\end{align}
where, $i=1,2$, $0 \leq s \leq r-1$, $W=W_1+W_2$ with probability one and the Markov chain $W_1-Q-W_2$ holds.  
 \end{theorem}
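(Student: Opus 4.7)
The plan is a two-encoder Körner--Marton-type construction using UQGCs in place of linear codes, with the decoder exploiting Lemma~\ref{lem:sum of two quasi group code} so that the sum of the two inner QGCs is itself a QGC whose auxiliary is $W=W_1+W_2$. Concretely, fix $(W_1,Q,W_2)$ satisfying the hypotheses. Generate uniform i.i.d.\ matrices $\mathbf{G}_j\in\ZZ_{p^r}^{k_j\times n}$ for $j\in[1:m]$, and let $\mathcal{C}_i^{\mathrm{in}}$ be the $(n,m,k_1,\ldots,k_m)$ QGC with these \emph{common} matrices, an independent uniform translation $\mathbf{b}_i$, and per-component distribution $p(W_i|Q=j)$; by (\ref{eq: QGC rate simplified}) its rate is $R_i^{\mathrm{in}}=(k/n)\,H(W_i|Q)$. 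Over this inner code build the outer UQGC $\mathcal{C}_i^{\mathrm{out}}=\bigcup_{\ell=1}^{L_i}(\mathcal{C}_i^{\mathrm{in}}+\mathbf{t}_i(\ell))$ with $L_i=2^{nR_i^{\mathrm{bin}}}$ independent uniform shift vectors; the rate delivered on link $i$ is $R_i=R_i^{\mathrm{bin}}$.

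For the encoding step, each encoder~$i$ observing $\mathbf{x}_i$ searches $\mathcal{C}_i^{\mathrm{out}}$ for a codeword $\hat{\mathbf{x}}_i$ jointly typical with $\mathbf{x}_i$ under $p(x_i,w_i)$ and transmits the corresponding bin index $\ell_i$. The covering lemma (Lemma~\ref{lem: covering}) applied with $U\leftarrow W_i$ and $\hat X\leftarrow W_i$ produces such an $\hat{\mathbf{x}}_i$ with high probability provided
$$R_i + \frac{H([W_i]_s|Q)}{H(W_i|Q)}\,R_i^{\mathrm{in}} > \log_2 p^s - H([W_i]_s|X_i),\quad 1\le s\le r.$$

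For the decoding step, on receipt of $(\ell_1,\ell_2)$ the decoder forms $\hat{\mathbf{x}}_1+\hat{\mathbf{x}}_2$, which by construction lies in $(\mathcal{C}_1^{\mathrm{in}}+\mathcal{C}_2^{\mathrm{in}})+\mathbf{t}_1(\ell_1)+\mathbf{t}_2(\ell_2)$. By Lemma~\ref{lem:sum of two quasi group code} the inner sum is a QGC with auxiliary $W=W_1+W_2$---and the Markov condition $W_1-Q-W_2$ is precisely what guarantees that its per-coordinate marginal factors correctly through $Q$---so by Lemma~\ref{lem: packing} applied to this sum QGC the decoder can uniquely identify the typical sum $\mathbf{x}_1+\mathbf{x}_2$ inside the shifted coset provided
$$R_1^{\mathrm{in}}+R_2^{\mathrm{in}}<\min_{0\le s\le r-1}\frac{H(W|Q)}{H(W|Q,[W]_s)}\bigl(\log_2 p^{r-s}-H(X|[X]_s)\bigr),$$
where $X$ denotes the reconstruction target $X_1+X_2$.

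The final step is rate elimination: solving the covering inequality at encoder~$i$ for $R_i^{\mathrm{in}}$ and substituting into the packing inequality on the sum code removes the inner rates in favour of the transmitted rates $R_i$. The Markov chain $W_1-Q-W_2$ is used a second time to split the joint sum-code bound symmetrically into a per-encoder statement, which is what produces the exact ratio $H(W_i|Q)/H(W|[W]_s\,Q)$ in the theorem. I expect this last algebraic step to be the main obstacle: reconciling per-encoder covering bounds whose denominators involve $H(W_i|Q)$ with a joint packing bound whose denominator involves $H(W|Q,[W]_s)$, while keeping track of the $p$-adic level $s$, will require choosing the inner rates $R_1^{\mathrm{in}}, R_2^{\mathrm{in}}$ to equalise the covering slack across the two encoders and then verifying that the chosen inner rates remain feasible against the sum-code packing constraint at every level $s$.
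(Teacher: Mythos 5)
Your overall architecture does match the paper's: two UQGC's built on identical generator matrices, covering at the encoders via Lemma~\ref{lem: covering}, and decoding of the sum on the codebook $\mathcal{C}_{1,in}+\mathcal{C}_{2,in}$, which is again a QGC with layer variable $W=W_1+W_2$ by Lemma~\ref{lem:sum of two quasi group code} (this is exactly where the Markov chain $W_1-Q-W_2$ enters). However, there is a genuine gap in your encoding step. You apply the covering lemma with $\hat X\leftarrow W_i$, i.e.\ each encoder quantizes $\mathbf{x}_i$ to a codeword $\hat{\mathbf{x}}_i$ that is merely jointly typical with $\mathbf{x}_i$ under some $p(x_i,w_i)$. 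But the theorem demands \emph{lossless} reconstruction of $X_1+X_2$: the decoder's output is $\tilde{\mathbf{c}}+t_1(\ell_1)+t_2(\ell_2)$, and this can only equal $\mathbf{x}_1+\mathbf{x}_2$ if each encoder finds an \emph{exact} representation $\mathbf{x}_i=\mathbf{c}_i+t_i(\ell_i)$ in its UQGC. In the paper this is done by invoking Lemma~\ref{lem: covering} with $\hat X=X=X_i$ with probability one, so the term $H([\hat X]_s|X)$ vanishes and the covering constraint is $R_i+\frac{k}{n}H([W_i]_s|Q)\geq \log_2 p^s$. In your version the decoder at best recovers $\hat{\mathbf{x}}_1+\hat{\mathbf{x}}_2\neq\mathbf{x}_1+\mathbf{x}_2$, so the scheme is not lossless; moreover your covering bound carries the spurious term $H([W_i]_s|X_i)$, which presupposes a joint distribution of $(X_i,W_i)$ that the theorem never specifies --- the $W_i$ are purely code-structure auxiliaries (the layer mixtures of the inner QGC's), not quantizers of the sources.

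This misreading is also what creates the ``main obstacle'' you flag at the end, which in fact dissolves under the correct application. With exact covering, one observes (as the paper does) that once the packing constraints $\frac{k}{n}H(W|Q,[W]_s)\leq \log_2 p^{r-s}-H(Z|[Z]_s)$, $0\leq s\leq r-1$, hold (with $Z=X_1+X_2$ and $Y=\emptyset$ in Lemma~\ref{lem: packing}), all covering constraints except the one at $s=r$ are redundant; so one simply sets $R_i+\frac{k}{n}H(W_i|Q)=\log_2 p^r$ and substitutes the bound on $\frac{k}{n}$ from the packing inequality to obtain (\ref{eq: achievable bounds dist}) directly. There is no need to ``equalise covering slack'' between the encoders, and the Markov chain is not used a second time to split the sum bound --- the ratio $H(W_i|Q)/H(W|Q,[W]_s)$ arises purely from this substitution. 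To repair your proof you should redo the encoding with exact covering, re-derive the covering constraints without the $H([W_i]_s|X_i)$ term, and then carry out this elimination; the remaining ingredients of your argument (common matrices, sum-code packing) are sound.
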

\begin{remark}
One can bound the cardinality of $Q$ by $|\mathcal{Q}|\leq r$. This implies that a UQGC with at most $r$ layers is enough to achieve the above bounds. 
\end{remark}

\begin{proof}[Outline of the proof]
Fix positive integers $n, m ,  k_1,  \dots,  k_m$. Let $\mathcal{C}_{1,in}$ and $\mathcal{C}_{2,in}$ be two $(n,m, k_1, \dots, k_m)$ QGC's (as in Definition \ref{def: QGC}) with identical matrices and translation, but with independent random variables. Suppose that the elements of the matrices and translation corresponding to $\mathcal{C}_{1,in}$ and $\mathcal{C}_{2,in}$ are selected randomly and uniformly from $\ZZ_{p^r}$.  
Let $t_1:[1:2^{nR_1}] \rightarrow \ZZ_{p^r}^n$ and $t_2:[1:2^{nR_2}] \rightarrow \ZZ_{p^r}^n$ be two maps selected randomly uniformly and independently of other random variables. 
 
\paragraph*{\textbf{Codebook Generation}} 

%
We use two UQGC's, one for each encoder. The codebook for the first encoder is a $(n,m, l_1, k_1, \dots, k_m)$ UQGC with $l_1=2^{nR_1}$, the inner code $\mathcal{C}_{1,in}$ and the mapping $t_1$.  For the second encoder use a $(n,m, l_2, k_1, \dots, k_m)$ UQGC with $l_2=2^{nR_2}$, the inner code $\mathcal{C}_{2,in}$ and the mapping $t_2$. For the decoder, we use $\mathcal{C}_{1, in}+\mathcal{C}_{2, in}$ as a codebook.

\paragraph*{\textbf{Encoding}}

Given a typical sequence $\mathbf{x}_1\in A_{\epsilon}^n(X_1)$, encoder 1 first finds $i\in [1:2^{nR_1}]$ and $\mathbf{c}_1\in \mathcal{C}_{1,in}$ such that $\mathbf{x}_1=\mathbf{c}_1+t_1(i)$; then it sends $i$. If no such $i$ is found, an error event $E_1$ will be declared. 

Similarly, upon receiving $\mathbf{x}_2 \in A_{\epsilon}^n(X_2)$, the second encoder finds $j\in [1:2^{nR_2}]$ and $\textbf{c}_2\in \mathcal{C}_{2,in}$ such that $\textbf{x}_2=\textbf{c}_2+t_2(j)$ and sends $j$. If no such $j$ is found, an error event $E_2$ will be declared. If more than one indices were found at each encoder, select one randomly. 

\paragraph*{\textbf{Decoding}}
The decoder wishes to reconstruct $\mathbf{x}_1+\mathbf{x}_2$. Assume there is no encoding error.  Upon receiving $i$ and $j$, the decoder first calculates $t_1(i)$ and $t_2(j)$. Then it finds $\tilde{\mathbf{c}}\in \mathcal{C}_{1, in}+\mathcal{C}_{2, in}$ such that $\tilde{\mathbf{c}}+t_1(i)+t_2(j) \in A_{\epsilon}^{(n)}(X_1+X_2)$. If such $\tilde{\mathbf{c}}$ is found, then $\tilde{\mathbf{c}}+t_1(i)+t_2(j)$ is declared as a reconstruction of $\mathbf{x}_1+\mathbf{x}_2$. An error event $E_d$ occurs, if no unique $\tilde{\mathbf{c}}$ was found. 

Using standard arguments for large enough $n$, we can ignored the event in which $\mathbf{x}_1$ and $\mathbf{x}_2$ are not typical. Note that the event $E_i$ is the same as the interested event in Lemma \ref{lem: covering}, where $\hat{X}=X=X_i$ with probability one, $\mathcal{C}_{out}=\mathcal{C}_i$, $R_{bin}=R_i$, $\mathcal{C}=\mathcal{C}_{in, i}$ and $R=R_{in, i}, i=1,2$. Therefore, applying Lemma \ref{lem: covering},  $\mathcal{C}_1$ and $\mathcal{C}_2$ need to satisfy (\ref{eq: covering bound}).  Using Lemma \ref{lem: packing}, we can show that $P(E_d)\rightarrow 0$ as $n\rightarrow \infty$, if the bounds in (\ref{eq: packing bound}) are satisfied with $Y=\emptyset, X=X_1+X_2$ and $\mathcal{C}=\mathcal{C}_{1,in}+\mathcal{C}_{2,in}$. Using the above argument, and noting that the effective transmission rate of the $i$th encoder is  $R_i$, we can derive the bounds in (\ref{eq: achievable bounds dist}). The cardinality bound on $\mathcal{Q}$ and the complete proof are provided in Appendix \ref{sec: proof dist}. 
\end{proof}


Since every linear code,  group code and transversal group code is a QGC,  their achievable rates are included in the rate region characterized in (\ref{eq: achievable bounds dist}). We show, through the following example, that UQGC's improves upon the previously known schemes.

%

\begin{example}
Consider a distributed source coding problem in which $X_1$ and $X_2$ are sources over $\ZZ_4$ and lossless reconstruction of $X_1+X_2$ is required at the decoder.  Assume $X_1$ is uniform over $\ZZ_4$. $X_2$ is related to $X_1$ via $X_2=N-X_1$, where $N$ is independent of $X_1$. The distribution of $N$ is given in Table \ref{tab: N}.

\begin{table}[h]
\caption {Distribution of $N$}\label{tab: N}
\begin{center}
\begin{tabular}{|c|c|c|c|c|}
\hline
N & 0 & 1 & 2 & 3\\
\hline
$P_N$ & $0.1\delta_N$ & $0.9\delta_N$ & $0.1(1-\delta_N)$ & $0.9(1-\delta_N)$\\
\hline
\end{tabular}
\end{center}
\end{table}

Using standard unstructured codes, the rates $R_1+R_2\geq H(X_1,X_2)$ are achievable. As is shown in \cite{Aria_group_codes}, group codes in this example outperform linear codes. The largest achievable region using group codes is $R_j \geq \max \{H(Z), 2 H(Z|[Z]_1)\}, \quad j=1,2, $ where $Z=X_1+X_2$. It is shown in \cite{transversal} that using transversal group codes the rates $$R_j\geq \max \{H(Z), 1/2 H(Z)+ H(Z|[Z]_1)\}$$ are achievable. An achievable rate region using UQGC's can be obtained from Theorem \ref{them: distributed source coding}. Let $Q$ be a trivial random variable and set $P(W_1=0)=P(W_2=0)=0.95$ and  $P(W_1=1)=P(W_2=1)=0.05$. As a result one can verify that the following is achievable: 
$$R_j \geq 2- \min\{ 0.6(2-H(Z)), 5.7 (2-2H(Z|[Z]_1)\}.$$

Let $\delta_N=0.6$. In this case, using unstructured codes the rate $R_i\approx 1.72$ is achievable, using group codes $R_i\approx1.94$ is achievable, using transversal group codes  $R_i\approx1.69$ is achievable. Whereas, $R_i\approx1.67$ is achievable using UQGC's.
\end{example}

\section{Computation Over MAC}\label{sec: comp_over_mac}
Through a variation from the standard computation over MAC problems, in this section, we explore distributed computation of the inputs of a MAC. Figure \ref{fig: comp_over_mac} depicts an example of this problem. Suppose the channel's inputs, $X_1$ and $X_2$, take values from $\ZZ_{p^r}$. Two distributed encoders map their messages to $X^n_1$ and $X^n_2$. Upon receiving the channels output the decoder wishes to decode $X^n_1+ X^n_2$ with no loss. The definition of a code for computation over MAC and an achievable rate are given in Definition \ref{def: code for comp over MAC} and \ref{def: comp over MAC achievable rate}, respectively.  Applications of this problem are in various multi-user communication setups such as interference and broadcast channels. 

\begin{figure}[h]
\centering
\includegraphics[scale=0.8]{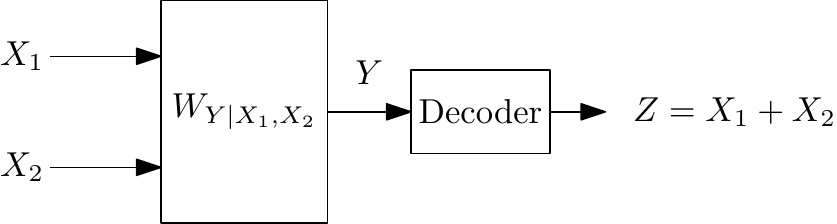} 
\caption{Computation over a two-users MAC.}
\label{fig: comp_over_mac}
\end{figure}


For the above setup, we use quasi group codes to derive an achievable rate region.

\begin{theorem}\label{thm: com_over_mac}
The following is achievable for computation over any MAC with input-alphabets $\ZZ_{p^r}$
\begin{align}\label{eq: comp_over MAC ahiev}
R_i \leq  \frac{H(W_i|Q)}{H(W|[W]_s, Q)} I(X_1+X_2;Y|[X_1+X_2]_s),
\end{align}
where $0\leq s \leq r-1$, $i=1,2$, $X_1$ and $X_2$ are independent and uniform over $\ZZ_{p^r}$, $W=W_1+W_2$, and $W_1-Q-W_2$ holds. Moreover, having $|\mathcal{Q}|\leq r$ is sufficient to achieve the above bounds.
\end{theorem}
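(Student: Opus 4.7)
The plan is to use two QGCs (without outer binning) with shared generator matrices, map messages bijectively to inner codewords, and have the decoder perform typical-set decoding over the sum code $\mathcal{C}_{1,in}+\mathcal{C}_{2,in}$, which is itself a QGC by Lemma \ref{lem:sum of two quasi group code}. First I would fix positive integers $n,m,k_1,\dots,k_m$ and independent random variables $W_{1,j},W_{2,j}$ over $\ZZ_{p^r}$, and construct two $(n,m,k_1,\dots,k_m)$ QGCs $\mathcal{C}_{1,in}$ and $\mathcal{C}_{2,in}$ with common, uniformly random generator matrices $\mathbf{G}_j$ and translation $\mathbf{b}$, but with typical sets $A_{\epsilon}^{(k_j)}(W_{i,j})$ for encoder $i$. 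Setting $W=W_1+W_2$ and defining $Q$ as in \ref{eq: QGC rate simplified} yields the Markov chain $W_1-Q-W_2$. Encoder $i$ bijectively assigns its $2^{nR_i}=|\mathcal{C}_{i,in}|$ messages to codewords $\mathbf{c}_i(m_i)\in\mathcal{C}_{i,in}$ and transmits $\mathbf{X}_i^n=\mathbf{c}_i(m_i)$; this bijection exists whp because the injectivity condition from the remark following Definition \ref{def: QGC} is implied by the packing condition imposed below.

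Upon receiving $\mathbf{Y}^n$, the decoder searches $\mathcal{C}_{1,in}+\mathcal{C}_{2,in}$ for a unique $\tilde{\mathbf{c}}$ jointly typical with $\mathbf{Y}^n$ w.r.t.\ the induced PMF $p(x,y)$, where $X=X_1+X_2$ is uniform on $\ZZ_{p^r}$ and $p(y|x)=\frac{1}{p^r}\sum_{x_1}p(y|x_1,x-x_1)$ is the MAC transition marginalized over independent uniform $(X_1,X_2)$ with $X_1+X_2=x$, and outputs $\tilde{\mathbf{c}}$ as the estimate of $X_1^n+X_2^n$. For the error analysis, I would show that per coordinate $(X_{1,i},X_{2,i})=(\mathbf{c}_{1,i},\mathbf{c}_{2,i})$ is uniform on $\ZZ_{p^r}^2$ over the random code whenever the $2\times k$ matrix with rows $(\mathbf{u}_1,\mathbf{u}_2)$ has rank two over $\ZZ_{p^r}$, a condition that holds whp for independent typical sequences (the reduced rows mod $p$ are generic vectors in $\FF_p^k$). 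Consequently $(\mathbf{c}_1+\mathbf{c}_2,\mathbf{Y}^n)$ lies in $A_{\epsilon}^{(n)}(X,Y)$ whp. By Lemma \ref{lem:sum of two quasi group code}, the sum code is a QGC with layer variables $W_{1,j}+W_{2,j}$ and combined variable $W$; applying Lemma \ref{lem: packing} to it with $(X,Y)$ the induced channel pair yields, using uniformity of $X$ so that $H(X|[X]_s)=\log_2 p^{r-s}$,
\begin{align*}
\frac{k}{n}H(W|Q,[W]_s) < \log_2 p^{r-s}-H(X|Y,[X]_s) = I(X;Y|[X]_s)
\end{align*}
for each $0\leq s\leq r-1$.

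Since $R_i=\frac{1}{n}\log_2|\mathcal{C}_{i,in}|=\frac{k}{n}H(W_i|Q)$, multiplying the displayed inequality by $H(W_i|Q)/H(W|Q,[W]_s)$ gives the claimed individual rate bound
\begin{align*}
R_i \leq \min_{0\leq s\leq r-1}\frac{H(W_i|Q)}{H(W|Q,[W]_s)}\, I(X;Y|[X]_s).
\end{align*}
The cardinality bound $|\mathcal{Q}|\leq r$ follows from a standard Caratheodory-type argument on the convex hull of rate vectors attainable over the choice of $p(w_1,w_2|q)$: there are $r$ linear constraints (one per value of $s$) that must be preserved, so a support of size $r$ for $Q$ is enough, analogous to the reduction in Appendix \ref{sec: proof dist}.

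The main obstacle I expect is justifying that the triple $(\mathbf{c}_1,\mathbf{c}_2,\mathbf{Y}^n)$ induces the correct joint distribution for Lemma \ref{lem: packing} to apply with the intended $p(y|x)$. Although the marginal of $\mathbf{c}_1+\mathbf{c}_2$ is uniform over $\ZZ_{p^r}^n$, the per-coordinate joint of $(X_{1,i},X_{2,i})$ equals uniform-product on $\ZZ_{p^r}^2$ only under the rank-two condition on $(\mathbf{u}_1,\mathbf{u}_2)$ over the local ring $\ZZ_{p^r}$; controlling the vanishing probability that this condition fails (by analyzing the reduction modulo $p$) is routine but must be handled as a separate error event in the formal proof deferred to the appendix.
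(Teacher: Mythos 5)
Your overall architecture (no outer binning, shared generator matrices, decode the sum over $\mathcal{C}_{1,in}+\mathcal{C}_{2,in}$ via Lemma \ref{lem:sum of two quasi group code} and a packing bound, then rescale by $H(W_i|Q)/H(W|Q,[W]_s)$ and use a support-lemma argument for $|\mathcal{Q}|\le r$) matches the paper. But there is a genuine gap exactly at the point you flag and then dismiss as routine: with a \emph{common} translation $\mathbf{b}$ (and common matrices), the claim that the transmitted pair is coordinate-wise uniform on $\ZZ_{p^r}^2$ ``whp because typical rows are generic mod $p$'' is false for general $W_1,W_2$ admitted by the theorem. Typical sequences of $W_{i,j}$ are not generic vectors of $\FF_p^k$: if $W_{i,j}$ is supported in $p\ZZ_{p^r}$ the reductions mod $p$ are identically zero, and even for full-unit supports such as $W_{i,j}$ uniform on $\{1,3\}\subset\ZZ_4$ every difference $\mathbf{u}_1-\mathbf{u}_2$ lies in $2\ZZ_4^k$, so your rank/unit condition fails for \emph{every} message pair, not just on a vanishing set. (Note also that with a shared $\mathbf{b}$ the relevant condition is on $\mathbf{u}_1-\mathbf{u}_2$, not on the rank of the $2\times k$ matrix.) When it fails, $(\mathbf{x}_1,\mathbf{x}_2)$ is supported on a coset-constrained set, the channel output statistics relative to $\mathbf{x}_1+\mathbf{x}_2$ no longer match the $P_{Y|X_1+X_2}$ induced by independent uniform inputs, and the \emph{correct} sum codeword is not jointly typical with $\mathbf{Y}^n$; the scheme then fails for precisely the non-uniform single-letter choices of $W_1,W_2$ the theorem is meant to allow. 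A secondary soft spot is invoking Lemma \ref{lem: packing} as a black box: its hypothesis that $\tilde{\mathbf{Y}}^n$ is conditionally independent of the other codewords given the transmitted one is not immediate here, since $Y^n$ depends on $(\mathbf{x}_1,\mathbf{x}_2)$ and all codewords of the sum code share the same matrices.

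The paper resolves both issues by a small but load-bearing design change: the two QGCs share the matrices $\mathbf{G}_i$ but have \emph{independent} uniform translations $B_1,B_2$. Then for any fixed message pair the transmitted pair $(\Phi(\mathbf{u})+B_1,\Phi(\mathbf{v})+B_2)$ is exactly i.i.d.\ uniform-product across coordinates (no genericity argument needed), the sum code has uniform translation $B_1+B_2$ so it is still a QGC, and in the direct error computation the joint probability factorizes as $p^{-2nr}P\{\Phi(\tilde{\mathbf{w}}_1^m-\mathbf{u}_1^m-\mathbf{v}_1^m)=\tilde{\mathbf{z}}-\mathbf{x}_1-\mathbf{x}_2\}$, after which Lemma \ref{lem: P(phi)} and Lemma \ref{lem: typical set intersection subgroup} yield $\frac{k}{n}H(W|Q,[W]_s)\le \log_2 p^{r-s}-H(Z|Y[Z]_s)$ for all $0\le s\le r-1$, and the rest of your rescaling and cardinality steps go through as you wrote them. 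So either switch to independent dithers as in the paper, or restrict the claimed region — as written, the ``routine'' step cannot be repaired for all admissible $(W_1,W_2,Q)$.
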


\begin{proof}[Outline of the proof]

\textbf{Codebook Generation:} Let $\mathcal{C}_1$ and $\mathcal{C}_2$ be two $(n,m, k_1, k_2, \dots, k_m)$ QGC with identical matrices and independent random variables. Suppose the translations are chosen randomly, independently and uniformly over $\ZZ_{p^r}^n$. Let the elements of the matrices be chosen randomly and uniformly over $\ZZ_{p^r}$. Index all the codewords in each codebooks.

\textbf{Encoding:}  Upon receiving a message index $\theta_j$, encoder $j$ sends the corresponding codeword in $\mathcal{C}_j$, where $j=1,2$. Suppose the output of encoder $j$ is $\mathbf{x}_j, j=1,2$.

\textbf{Decoding:}
Upon receiving $\mathbf{y}$ from the channel, the decoder wishes to decode $\mathbf{x}=\mathbf{x}_1+\mathbf{x}_2$. It finds $\tilde{\mathbf{x}} \in \mathcal{C}_1+\mathcal{C}_2$ such that $\tilde{\mathbf{x}}$ and $\mathbf{y}$ are jointly typical with respect to the distribution $P_{X_1+X_2, Y}$, where $X_1$ and $X_2$ are independent and uniform over $\ZZ_{p^r}$. An error event $E$ is declared, if no unique $\tilde{\mathbf{x}}$ was found. 

Note that by Lemma \ref{lem:sum of two quasi group code}, $\mathcal{C}_1+\mathcal{C}_2$ is a QGC. Using Lemma \ref{lem: packing}, we can show that $P(E)\rightarrow 0$ as $n\rightarrow \infty$, if the bounds in (\ref{eq: packing bound}) hold, where $X=X_1+X_2$ and $\mathcal{C}=\mathcal{C}_1+\mathcal{C}_2$.   Multiply both sides of this bound by $\frac{\log|\mathcal{C}_i|}{\log|\mathcal{C}_1+\mathcal{C}_2|}$. This in turn implies (\ref{eq: comp_over MAC ahiev}), where $P(Q=q) $ is a rational number and  $R_i=\frac{1}{n} \log|\mathcal{C}_i|$.  The complete proof is provided in Appendix \ref{sec: proof of achiv_mac}. 
\end{proof}

We show, through the following example, that using QGC one can improve upon the previously known schemes.

%

\begin{example}\label{ex: comp_z4}
Consider the MAC described by $Y=X_1\oplus X_2 \oplus N,$ where $X_1$ and $X_2$ are the channel inputs with alphabet $\ZZ_4$. $N$ is independent of $X_1$ and $X_2$ with the distribution given in Table \ref{tab: N}, where $0\leq \delta_N \leq 1$.

Using standard unstructured codes the rates satisfying $R_1+R_2\leq I(X_1X_2;Y)$ are achievable. It is shown in \cite{Aria_group_codes} that the largest achievable region using group codes is  $R_i \leq  \min \{  I(Z;Y), 2  I(Z;Y|[Z]_1)\}$, where $Z=X_1+X_2$ and $X_1$ and $X_2$ are uniform over $\ZZ_4$.  It is shown in \cite{transversal} that transversal group codes achieve $$ R_i \leq \min \{ I(Z;Y), 0.5I(Z;Y)+ I(Z; Y| [Z]_1)\}.$$ Using Theorem \ref{thm: com_over_mac}, QGC's achieve
$$R_i \leq \min \{ 0.6 I(Z;Y), 5.7 I(Z; Y| [Z]_1)\}.$$
This can be shown by setting $Q$ to be a trivial random variable, $P(W_1=0)=P(W_2=0)=0.95$ and  $P(W_1=1)=P(W_2=1)=0.05$.

Let $\delta_N=0.6$. Then $R_i \approx 0.28$ is achievable using unstructured codes, $R_i\approx 0.06$ is achievable using group codes and $R_i\approx 0.31$ is achievable using transversal group codes. Whereas, $R_i\approx 0.33$ is achievable using QGC's. 
\end{example}

\section{Conclusion}\label{sec: conclusion}
The problem of computing modulo prime-power was considered. A new layered ensemble of structured codes called QGC was introduced. We investigated the performance limits of these codes in distributed source coding and computation over MAC.  Achievability results using these codes were provided for both settings. We showed that the application of QGC's for these problems results in improvements in terms of transmission rates.

\appendices
\section{Proof of Lemma \ref{lem: packing}} \label{sec: proof of the packing lemma}

Let $\mathcal{C}$ be the random $(n,m,k_1, \dots, k_m)$ QGC as in Lemma \ref{lem: packing}. For shorthand, denote  $\mathcal{U}=\bigotimes_{i=1}^m A_\epsilon^{(k_i)}(U_i)$.  For any $\mathbf{u}_1^m\in \mathcal{U}$, denote $\Phi(\mathbf{u}_1^m)=\sum_{i=1}^m \mathbf{u}_i\mathbf{G}_i$, where $\mathbf{G}_i$'s are random matrices corresponding to $\mathcal{C}$. Fix $\mathbf{u}_0 \in \mathcal{U}$. Without loss of generality assume $\mathbf{c}(\theta)=\Phi(\mathbf{u}_0)+B$, where $B$ is the translation associated with $\mathcal{C}$. Define $\mathcal{E}(\mathbf{u}_1^m):=\{(\Phi(\mathbf{u}_1^m),\tilde{\mathbf{Y}})\in A_\epsilon^{(n)}(X,Y)\}$,  and let $\mathcal{E}$ be the interested event as in the lemma. Then $\mathcal{E}$ is the union of $\mathcal{E}(\mathbf{u}_1^m)$ for all $\mathbf{u}_1^m\in \mathcal{U}\backslash \{\mathbf{u}_0\}$. By the union bound, the probability of $\mathcal{E}$ is bounded as 
\begin{align}\label{eq: union bound on P(E)}
P(\mathcal{E})\leq \sum_{\substack{\mathbf{u}_1^m\in \mathcal{U}\\ \mathbf{u}\neq \mathbf{u}_0}} P(\mathcal{E}(\mathbf{u}_1^m))
\end{align}

The probability of  $\mathcal{E}(\mathbf{u}_1^m)$, can be calculated as,
\begin{align}\label{eq: inner_probability}
P(\mathcal{E}(\mathbf{u}_1^m)) &\approx \sum_{\mathbf{x}_0\in \ZZ_{p^r}^n}  \sum_{\mathbf{y}\in A_\epsilon^{(n)}(Y|\mathbf{x}_0)}P(\Phi(\mathbf{u}_0)+B=\mathbf{x}_0,\tilde{\mathbf{Y}}=\mathbf{y}, \mathcal{E}(\mathbf{u}_1^m))\\\label{eq: inner_probability_eq2}
&=\sum_{(\mathbf{x}_0, \mathbf{y})\in A_\epsilon^{(n)}(X,Y)}  \sum_{\mathbf{x}\in A_\epsilon^{(n)}(X|\mathbf{y})} P(\Phi(\mathbf{u}_0)+B=\mathbf{x}_0,\tilde{\mathbf{Y}}=\mathbf{y}, \Phi(\mathbf{u}_1^m)+B=\mathbf{x} )
\end{align} 
By assumption, conditioned on $\Phi(\mathbf{u}_0)+B$, the random variable $\tilde{\mathbf{Y}}$ is independent of $\Phi(\mathbf{u}_1^m)+B$.  Therefore, the most inner term in (\ref{eq: inner_probability_eq2}) is simplified to 
\begin{equation}
P(\Phi(\mathbf{u}_0)+B=\mathbf{x}_0, \Phi(\mathbf{u}_1^m)+B=\mathbf{x}) p(\mathbf{y}|\mathbf{x}_0).
\end{equation}
Since $B$ is uniform  over $\ZZ_{p^r}^n$, and is independent of other random variables, 
\begin{equation}
P(\Phi(\mathbf{u}_0)+B=\mathbf{x}_0, \Phi(\mathbf{u}_1^m)+B=\mathbf{x})=p^{-nr}P(\Phi(\mathbf{u}_1^m-\mathbf{u}_0)=\mathbf{x}-\mathbf{x}_0).
\end{equation}


We need the following lemma to proceed. Recall that for any $0\leq s \leq r$, $H^n_s=p^s \ZZ^n_{p^r}$.
\begin{lem} \label{lem: P(phi)}
Suppose that $\mathbf{G}_i, i \in [1:m]$ are matrices with elements generated randomly and uniformly from $\ZZ_{p^r}$. If $\mathbf{u}^m_1 \in H^k_s\backslash H^k_{s+1}$, then 
$$P\{ \sum_{i=1}^m \mathbf{u}_i \mathbf{G}_i=\mathbf{x}\}= p^{-n(r-s)}  \11\{x\in H_s^n\}.$$
\end{lem}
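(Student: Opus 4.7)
The plan is two-step: first show that $\sum_{i=1}^m \mathbf{u}_i \mathbf{G}_i$ is supported on $H_s^n$, then show its conditional distribution on $H_s^n$ is uniform, which gives probability $|H_s^n|^{-1} = p^{-n(r-s)}$ at each point. The support statement is immediate from the hypothesis: every entry of each $\mathbf{u}_i$ is divisible by $p^s$, so every entry of $\mathbf{u}_i \mathbf{G}_i$ is as well, and hence $\sum_i \mathbf{u}_i \mathbf{G}_i \in H_s^n$ with probability one.

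For uniformity, I would first reduce to a single-coordinate statement. The $n$ columns of every $\mathbf{G}_i$ are mutually independent and uniform over $\ZZ_{p^r}^{k_i}$, and matrices $\mathbf{G}_i, \mathbf{G}_{i'}$ are independent for $i \neq i'$, so the $n$ coordinates of $\sum_i \mathbf{u}_i \mathbf{G}_i$ are i.i.d. It therefore suffices to show that for a single coordinate $j$, the random variable $S_j := \sum_{i=1}^m \sum_{t=1}^{k_i} u_{i,t} g_{i,t,j}$, with $g_{i,t,j}$ independent and uniform on $\ZZ_{p^r}$, is uniform on $H_s$.

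Write $u_{i,t} = p^s v_{i,t}$ with $v_{i,t} \in \{0,\ldots,p^{r-s}-1\}$, so that $S_j = p^s \sum_{i,t} v_{i,t}\, g_{i,t,j} \pmod{p^r}$. Since $p^s x$ depends on $x$ only modulo $p^{r-s}$ (as $p^s \cdot p^{r-s} \equiv 0 \pmod{p^r}$), we may write $S_j = p^s \tilde{S}_j$, where $\tilde{S}_j := \sum_{i,t} v_{i,t}\, \tilde{g}_{i,t,j} \pmod{p^{r-s}}$ and $\tilde{g}_{i,t,j} := g_{i,t,j} \bmod p^{r-s}$ are i.i.d. uniform on $\ZZ_{p^{r-s}}$. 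The hypothesis $\mathbf{u}_1^m \notin H_{s+1}^k$ guarantees that at least one coordinate $v_{i^\ast, t^\ast}$ is not divisible by $p$, hence is a unit in $\ZZ_{p^{r-s}}$. Consequently $v_{i^\ast, t^\ast}\, \tilde{g}_{i^\ast, t^\ast, j}$ is uniform on $\ZZ_{p^{r-s}}$, and adding the independent remaining terms preserves uniformity. Thus $\tilde{S}_j$ is uniform on $\ZZ_{p^{r-s}}$ and, via the bijection $x \mapsto p^s x$ from $\ZZ_{p^{r-s}}$ onto $H_s$, the variable $S_j$ is uniform on $H_s$.

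The only step that requires a little care is the reduction modulo $p^{r-s}$, since the randomness $\mathbf{G}_i$ lives in $\ZZ_{p^r}$ while the effective source of uniformity needed is in $\ZZ_{p^{r-s}}$; the rest reduces to standard facts (independence of columns, and uniformity preserved under translation and under multiplication by a unit). This is also where the condition $\mathbf{u}_1^m \notin H_{s+1}^k$ enters crucially: without it, no $v_{i,t}$ need be a unit, and the support of $S_j$ would collapse to a proper sub-subgroup of $H_s$ rather than filling it.
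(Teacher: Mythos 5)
Your proof is correct and complete. Note, however, that the paper does not actually prove this lemma: it simply observes that the statement is a special case of Lemma 3 in the cited work of Sahebi and Pradhan on Abelian group codes, where the analogous distributional fact is established for codes over general finite Abelian groups. Your argument is a self-contained, elementary derivation specialized to $\ZZ_{p^r}$, and every step checks out: the support claim follows since each entry of $\mathbf{u}_1^m$ is a multiple of $p^s$; the reduction to a single coordinate is valid because the $j$th coordinate of $\sum_i \mathbf{u}_i\mathbf{G}_i$ depends only on the $j$th columns of the $\mathbf{G}_i$, whose entries are i.i.d.\ uniform, so the $n$ coordinates are i.i.d.; the passage from $\ZZ_{p^r}$ to $\ZZ_{p^{r-s}}$ is sound because $x\mapsto p^s x \pmod{p^r}$ depends only on $x \bmod p^{r-s}$ and reduction mod $p^{r-s}$ of a uniform variable on $\ZZ_{p^r}$ is uniform on $\ZZ_{p^{r-s}}$; and the hypothesis $\mathbf{u}_1^m\notin H_{s+1}^k$ is used exactly where it must be, to produce a unit coefficient $v_{i^\ast,t^\ast}$ in $\ZZ_{p^{r-s}}$ so that the corresponding term is uniform and convolution with the independent remaining terms preserves uniformity on the group. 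What your route buys is transparency and independence from the external reference, at the cost of being tied to the cyclic prime-power structure (units, the bijection $x\mapsto p^s x$ onto $H_s$); the cited lemma buys generality, since it covers arbitrary Abelian groups where the same statement is phrased through the subgroup structure rather than divisibility by $p$.
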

\begin{proof}
This Lemma is an special case of Lemma 3 in \cite{Aria_group_codes}. Therefore, we omit the proof of this Lemma.  
\end{proof}

Using the above lemma, if $\mathbf{u}^m_1-\mathbf{u}_0 \in H^k_s\backslash H^k_{s+1}$, then

 \begin{align}\label{eq: inner_probability_cont}
P(\mathcal{E}(\mathbf{u}_1^m))&=  \sum_{(\mathbf{x}_0, \mathbf{y}) \in A_\epsilon^{(n)}(X,Y)} \sum_{\substack{\mathbf{x}\in A_\epsilon^{(n)}(X|\mathbf{y})\\ \mathbf{x}-\mathbf{x}_0\in H_s^n}} p^{-nr}p(\mathbf{y}|\mathbf{x}_0)p^{-n(r-s)} \\\label{eq: inner_probability_cont_eq2}
&=  \sum_{(\mathbf{x}_0, \mathbf{y}) \in A_\epsilon^{(n)}(X,Y)} p^{-nr}p(\mathbf{y}|\mathbf{x}_0)p^{-n(r-s)} |A_\epsilon^{(n)}(X|\mathbf{y})\cap (\mathbf{x}_0+ H_s^n)|
\end{align} 



To calculate $|A_\epsilon^{(n)}(X|\mathbf{y})\cap (\mathbf{x}_0+ H_s^n)|$, the following lemma is needed.
  
\begin{lem}\label{lem: typical set intersection subgroup}
Consider any typical sequences $(\mathbf{\tilde{x}}, \mathbf{y}) \in A_{\epsilon}^{(n)}(X,Y)$. Let $\mathcal{A}=\{\mathbf{x} ~ | ~ \mathbf{x}\in A_\epsilon^n(X|\mathbf{y}), \mathbf{x}-\mathbf{\tilde{x}} \in H^n_{s}\}.$
Then 
\begin{align*}
(1-\epsilon)2^{nH(X|Y [X]_{s})-O(\epsilon)}\leq |\mathcal{A}|\leq 2^{nH(X|Y [X]_{s})+O(\epsilon)}
\end{align*}
\end{lem}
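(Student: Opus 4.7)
The plan is to recognize $\mathcal{A}$ as (essentially) a conditional typical set and then invoke the standard conditional typicality lemma. First I would use the unique decomposition $a = [a]_s + h$ with $[a]_s \in T_s$ and $h \in H_s$, applied coordinatewise, to observe that $\mathbf{x} - \tilde{\mathbf{x}} \in H_s^n$ if and only if $[\mathbf{x}]_s = [\tilde{\mathbf{x}}]_s$ in every component. Hence
\begin{equation*}
\mathcal{A} = \{ \mathbf{x} \in A_\epsilon^{(n)}(X \mid \mathbf{y}) : [\mathbf{x}]_s = [\tilde{\mathbf{x}}]_s \}.
\end{equation*}

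Because $[X]_s$ is a deterministic function of $X$, membership in $\mathcal{A}$ is equivalent, up to an $O(\epsilon)$ adjustment of the typicality parameter, to the triple $(\mathbf{x}, [\tilde{\mathbf{x}}]_s, \mathbf{y})$ being jointly $\epsilon$-typical with respect to the joint distribution of $(X, [X]_s, Y)$. The hypothesis $(\tilde{\mathbf{x}}, \mathbf{y}) \in A_\epsilon^{(n)}(X, Y)$ forces $([\tilde{\mathbf{x}}]_s, \mathbf{y})$ to be $\epsilon$-typical with respect to the marginal of $([X]_s, Y)$, so the standard conditional typicality lemma applies and gives
\begin{equation*}
(1 - \delta(\epsilon))\, 2^{n(H(X \mid [X]_s, Y) - \delta(\epsilon))} \ \leq\ |\mathcal{A}| \ \leq\ 2^{n(H(X \mid [X]_s, Y) + \delta(\epsilon))},
\end{equation*}
with $\delta(\epsilon) \to 0$ as $\epsilon \to 0$. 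This is precisely the claimed two-sided bound after absorbing constants into the $O(\epsilon)$ exponent and identifying the $(1-\epsilon)$ prefactor with $(1-\delta(\epsilon))$.

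The main subtlety---really the only delicate step---is that the pointwise coset constraint $[\mathbf{x}]_s = [\tilde{\mathbf{x}}]_s$ is strictly stronger than what strong joint typicality with $p(x, [x]_s, y)$ demands, since the latter would permit disagreement on an $\epsilon$-fraction of positions; this cleanly gives the upper bound but needs a bit more work for the lower bound. I would handle the lower bound by a direct method-of-types count in the parameterization $\mathbf{x} = [\tilde{\mathbf{x}}]_s + p^s \mathbf{z}$ with $\mathbf{z} \in \{0, 1, \dots, p^{r-s}-1\}^n$: grouping indices $i$ by the pair $([\tilde{x}_i]_s, y_i)$, the joint type of $(\mathbf{x},\mathbf{y})$ is determined by the conditional empirical distribution of $\mathbf{z}$ on each group, and requiring each of these to approximate $p(X \mid [X]_s, Y)$ up to $O(\epsilon)$ error leads, via standard conditional type-class cardinality bounds, to the exponent $n H(X \mid [X]_s, Y)$ together with the $(1-\epsilon)$ polynomial factor---exactly as predicted by the conditional typicality lemma.
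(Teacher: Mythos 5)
Your proposal is correct, but note that the paper does not actually prove this lemma: its ``proof'' is a one-line citation to Lemma~4 of Sahebi--Pradhan \cite{Aria_group_codes}, so you have supplied a self-contained argument where the paper outsources one. Your route is the natural one and, as far as the structure of the cited result goes, essentially what that external lemma does: the key step is the observation that $\mathbf{x}-\tilde{\mathbf{x}}\in H_s^n$ holds iff $[x_i]_s=[\tilde{x}_i]_s$ coordinatewise (since $T_s$ is a transversal of the subgroup $H_s$, two elements lie in the same coset exactly when their representatives agree), which converts the coset constraint into conditioning on the sequence $[\tilde{\mathbf{x}}]_s$; since $[X]_s$ is a deterministic function of $X$, the pair $([\tilde{\mathbf{x}}]_s,\mathbf{y})$ inherits joint typicality and the conditional typicality counting bounds give the exponent $H(X\mid Y[X]_s)$. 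You also correctly flag the only delicate point, namely that exact coset agreement is stronger than what (additively defined) joint typicality of $(\mathbf{x},[\tilde{\mathbf{x}}]_s,\mathbf{y})$ would impose, and your fix for the lower bound --- parameterizing $\mathbf{x}=[\tilde{\mathbf{x}}]_s+p^s\mathbf{z}$ and counting conditional types of $\mathbf{z}$ within each group of indices sharing the same $([\tilde{x}_i]_s,y_i)$ --- is sound; under the robust (multiplicative) typicality convention the two notions in fact coincide on the zero-probability triples and the issue disappears, which is worth a sentence if you write this up. The only cosmetic caveat is that the lower bound, like all such statements, holds for $n$ sufficiently large, which the paper's statement leaves implicit. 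What your version buys is self-containedness and transparency about where the constants $O(\epsilon)$ and the $(1-\epsilon)$ prefactor come from; what the paper's citation buys is brevity and consistency with the group-code literature it builds on.
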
  
 \begin{proof}
Refer to Lemma 4 in \cite{Aria_group_codes}.
\end{proof} 
Using the above lemma,  (\ref{eq: inner_probability_cont_eq2}) is bounded as,
\begin{align*}
 & \sum_{(\mathbf{x}_0, \mathbf{y}) \in A_\epsilon^{(n)}(X,Y)} p^{-nr}p(\mathbf{y}|\mathbf{x}_0)p^{-n(r-s)} 2^{nH(X|Y [X]_{s})+O(\epsilon)} \leq p^{-n(r-s)} 2^{nH(X|Y [X]_{s})+O(\epsilon)}.
\end{align*}
Therefore, if $\mathbf{u}^m_1-\mathbf{u}_0 \in H^k_s\backslash H^k_{s+1}$, then $$P(\mathcal{E}(\mathbf{u}_1^m))\leq  p^{-n(r-s)} 2^{nH(X|Y [X]_{s})+O(\epsilon)}. $$
Note that $\ZZ_{p^r}^k$ can be written as disjoint union of $\bigcup_{s=0}^r (H_s^k \backslash H_{s+1}^k)$. Since $\mathbf{u}_1^m\neq \mathbf{u}_0$, we can remove the case $s=r$. Therefore, there are $r$ different cases for each value of $s$. Using (\ref{eq: union bound on P(E)}), and considering these cases, we obtain 

\begin{align*}
P(\mathcal{E}) & \leq \sum_{s=0}^{r-1}\sum_{\substack{\mathbf{u}_1^m\in \mathcal{U}\\ \mathbf{u}_1^m-\mathbf{u}_0 \in  H^k_s \backslash H^k_{s+1}  }} P(\mathcal{E}(\mathbf{u}_1^m)) \leq \sum_{s=0}^{r-1}\sum_{\substack{\mathbf{u}_1^m\in \mathcal{U}\\ \mathbf{u}_1^m-\mathbf{u}_0 \in  H^k_s \backslash H^k_{s+1}  }}  2^{nH(X|Y[X]_s)}p^{-n(r-s)}\\
&\leq  \sum_{s=0}^{r-1}|\mathcal{U}\cap (\mathbf{u}_0+H_s^k)| 2^{nH(X|Y[X]_s)}p^{-n(r-s)}
\end{align*}  

Since   $\mathcal{U}=\bigotimes_{i=1}^m A_\epsilon^{(k_i)}(U_i)$, for each component $i$ of $\mathcal{U}$, we can apply Lemma \ref{lem: typical set intersection subgroup} to get,
$$|\mathcal{U}\cap (\mathbf{u}_0+H_s^k)|\leq 2^{\sum_i k_i H(U_i|[U_i]_s)}=2^{kH(U|Q[U]_s)}.$$
As a result $P(\mathcal{E})\rightarrow 0$ as $n\rightarrow \infty$, if $\frac{k}{n}H(U|Q[U]_s) \leq \log_2p^{r-s}-H(X|Y[X]_s)$, for all $0\leq s\leq r-1$.  
Multiply each side of this inequality by $\frac{H(U|Q)}{H(U|Q[U]_s)}$, gives the following bound
$$\frac{k}{n}H(U|Q) \leq \frac{H(U|Q)}{H(U|Q[U]_s)}(\log_2p^{r-s}-H(X|Y[X]_s))$$  
By definition $R=\frac{1}{n}\log_2|\mathcal{C}|\leq \frac{k}{n}H(U|Q)$. Therefore, 
$$R \leq \frac{H(U|Q)}{H(U|Q[U]_s)}(\log_2p^{r-s}-H(X|Y[X]_s)),$$ and the proof is completed.

\section{Proof of Lemma \ref{lem: covering}}\label{sec: proof of the covering lemma}
We use the same notation as in the proof of Lemma \ref{lem: packing}. For any typical sequence $\mathbf{x}$ define
\begin{align*}
\delta(\mathbf{x})= \sum_{\mathbf{\hat{x}}\in A_\epsilon^{(n)}(\hat{X}|\mathbf{x})}  \sum_{\mathbf{u}_1^m \in \mathcal{U}}  \sum_{j=1}^l \mathbbm{1}\{\Phi(\mathbf{u}_1^m)+t(j)=\hat{x}\}.
\end{align*}
Note $\delta(\mathbf{x})$	counts the number of codewords that are conditionally typical with $\mathbf{x}$ with respect to $p(\hat{\mathbf{x}, \mathbf{x}})$. So, if $\delta(\mathbf{x})=0$,  then the complement of the interested event in Lemma \ref{lem: covering} occurs. Hence, it suffices to show that $\lim_{n\rightarrow \infty }P(\delta(\mathbf{x})=0)=0$. We have,
\begin{align*}
P\{\delta(\mathbf{x})=0\} \leq  P\Big\{\delta(\mathbf{x})\leq \frac{1}{2} E(\delta(x))\Big\}  \leq P\Big\{|\delta(x)-E(\delta(x)) |\geq \frac{1}{2} E(\delta(x))\Big\}
\end{align*}
Hence, by Chebyshev's inequality, $P\{\delta(\mathbf{x})=0\}  \leq \frac{4 Var(\delta(x))}{E(\delta(x))^2}$.
Note that 
\begin{align} \label{eq: expectation_delta}
E(\delta(x))=\sum_{\mathbf{\hat{x}}\in A_\epsilon^{(n)}(\hat{X}|\mathbf{x})}  \sum_{\mathbf{u}_1^m \in \mathcal{U}}  \sum_{j=1}^l   P\{\Phi(\mathbf{u}_1^m)+t(j)=\hat{\mathbf{x}}\}
\end{align}
Since $t(j)$ is uniform over $\ZZ_{p^r}^n$, and $l=2^{nR_{bin}}$ we get
\begin{align} \label{eq: expectation_delta}
E(\delta(x))\leq  2^{nH(\hat{X}|X)} | \mathcal{U}| 2^{nR_{bin}} p^{-rn}.
\end{align}
Note $|\mathcal{U}|\approx 2^{kH(U|Q)}$. To calculate the variance, we start with
\begin{align*}
E(\delta(x)^2)&= \sum_{\mathbf{\hat{x},\hat{x}'}\in A_\epsilon^{(n)}(\hat{X}|\mathbf{x})} \sum_{\mathbf{u}_1^m, {\mathbf{u}'}_1^m \in \mathcal{U}}  \sum_{j, j'=1}^l   P\{\Phi(\mathbf{u}_1^m)+t(j)=\mathbf{\hat{x}}, \Phi({\mathbf{u}'}_1^m)+t(j')=\mathbf{\hat{x}'}\}.
\end{align*}
If $j\neq j'$, then $t(j)$ is independent of $t(j')$. Therefore $$ P\{\Phi(\mathbf{u}_1^m)+t(j)=\mathbf{\hat{x}}, \Phi({\mathbf{u}'}_1^m)+t(j')=\mathbf{\hat{x}'}\}=p^{-2nr}.$$ As a result, 
\begin{align*}
E(\delta(x)^2)&\leq   E(\delta(x))^2 +  \sum_{\mathbf{\hat{x},\hat{x}'}\in A_\epsilon^{(n)}(\hat{X}|\mathbf{x})} \sum_{\mathbf{u}_1^m, {\mathbf{u}'}_1^m \in \mathcal{U}}  \sum_{j=1}^l   P\{\Phi(\mathbf{u}_1^m)+t(j)=\mathbf{\hat{x}}, \Phi({\mathbf{u}'}_1^m)+t(j)=\mathbf{\hat{x}'}\}.
\end{align*}

Since $t(j)$ is independent of other random variables, the most inner term in the above summations is simplified to $p^{-nr}P\{\Phi(\mathbf{u}_1^m-\mathbf{u'}_1^m)=\mathbf{\hat{x}}-\mathbf{\hat{x}'}\}$. Using Lemma \ref{lem: P(phi)}, if $\mathbf{u}_1^m-\mathbf{u'}_1^m \in H_s^k\backslash H_{s+1}^k$, then
\begin{align*}
P\{\Phi(\mathbf{u}_1^m-\mathbf{u'}_1^m)=\mathbf{\hat{x}}-\mathbf{\hat{x}'}\}=p^{n(r-s)}\mathbbm{1}\{\mathbf{\hat{x}}-\mathbf{\hat{x}'} \in H_s^n\}
\end{align*}
We use the same argument as in the proof of Lemma \ref{lem: packing}. We divide $\ZZ_{p^r}^k$ into disjoint union of $H_s^k\backslash H_{s+1}^k$. Hence, we get
\begin{align*}
E(\delta(x)^2)&\leq  E(\delta(x))^2 + \sum_{s=0}^{r} \sum_{\substack{\mathbf{u}_1^m, {\mathbf{u}'}_1^m \in \mathcal{U}\\ \mathbf{u}_1^m-{\mathbf{u}'}_1^m\in H^k_{s}\backslash H_{s+1}^k}}\sum_{\substack{\mathbf{\hat{x},\hat{x}'}\in A_\epsilon^{(n)}(\hat{X}|\mathbf{x})\\ \mathbf{\hat{x}}-\mathbf{\hat{x}'} \in H_s^n}}  \sum_{j=1}^l    p^{-nr}p^{-n(r-s)}
\end{align*}
 Since the most inner terms in the above summations do not depend on the individual values of $\mathbf{x}, \hat{\mathbf{x}}, \mathbf{u}_1^m, {\mathbf{u}'}_1^m, j$, the corresponding summations can be replaced by the size of the associated sets. Moreover, we upper-bound the summation over $\mathbf{u}_1^m, {\mathbf{u}'}_1^m$ by replacing $H_s^k\backslash H_{s+1}^k$ with $H_s^k$. Using Lemma \ref{lem: typical set intersection subgroup} for $\mathbf{x}, \hat{\mathbf{x}}$, we get
\begin{align*}
E(\delta(x)^2)&\leq  E(\delta(x))^2 + \sum_{s=0}^{r} \sum_{\mathbf{u}_1^m \in \mathcal{U}} \sum_{\substack{{\mathbf{u}'}_1^m \in \mathcal{U}\\ \mathbf{u}_1^m-{\mathbf{u}'}_1^m\in H^k_{s}}}2^{n(H(\hat{X}|X)+H(\hat{X}|X[\hat{X}]_s))} 2^{nR_{bin}} p^{-nr}p^{-n(r-s)}
\end{align*}

%

Note $\mathcal{U}=\bigotimes A_\epsilon^{(k_i)}(U_i)$. For any $\mathbf{u}_1^m$, using Lemma \ref{lem: typical set intersection subgroup}, $|\mathcal{U} \cap (\mathbf{u}_1^m+H_s^k)|\approx 2^{kH(U|Q[U]_s)}.$
 As a result, 
\begin{align*}
Var(\delta(x)^2)& \leq 2^{nR_{bin}} 2^{kH(U|Q)}p^{-nr} \sum_{s=0}^{r} 2^{kH(U|Q[U]_s)} 2^{n(H(\hat{X}|X)+H(\hat{X}|X[\hat{X}]_s))}  p^{-n(r-s)}
\end{align*}

Finally, using the Chebyshev's inequality argued before, we get

\begin{align*}
P\{\delta(\mathbf{x})=0\}&\leq 4 \sum_{s=0}^r 2^{kH(U|Q[U]_s)} ~ 2^{-kH(U|Q)} 2^{-n(H(\hat{X}|X)-H(\hat{X}|X[\hat{X}]_s))}p^{nr}p^{-n(r-s)} \\
&= \sum_{s=0}^r 2^{-KH([U]_s|Q)} 2^{-nH([\hat{X}]_s|X)} 2^{-nR_{bin}} p^{ns}.
\end{align*}
The second equality follows, because $H(V|W)-H(V|[V]_sW)=H([V]_s|W)$ holds for any random variables $V$ and $W$. Therefore, $P\{\delta(\mathbf{x})\}$ approaches zero, as $n\rightarrow \infty$, if 
\begin{align}\label{eq: covering_dist}
R_{bin}+\frac{k}{n} H([U]_{s}|Q) \geq  \log_2 p^s -H([\hat{X}]_s|X), \quad  \mbox{for} ~~0\leq s\leq r.
\end{align}
Note that the above inequality for the case $s=0$ is trivial.  In the above arguments we assumed that $R_{bin}>0$. One can verify that (\ref{eq: covering_dist 2}) still holds when  $R_{bin}=0$.
By the definition of rate and the above inequalities the proof is completed.

\section{ Proof of Theorem \ref{them: distributed source coding}} \label{sec: proof dist}
We need to find conditions for which the probability of the error events $E_1, E_2$ and $E_d$ is small enough. Without loss of generality we can assume that $\mathbf{b=0}$. Let $k=\sum_{i=1}^m k_i$.  Define the random variable $Q$ with PMF $P(Q=i)=\frac{k_i}{k}$. Define random variables $U$ and $V$ over $\ZZ_{p^r}$ with conditional PMF  $P(U =a | Q=i )=P(U_{i}=a)$ and $P(V=a|Q=i)=P(V_{i}=a), i\in [1:m], a\in \ZZ_{p^r}$. We follow the notation used in Appendix \ref{sec: proof of the packing lemma}. Let $\mathcal{U}=\bigotimes_{i=1}^m A_\epsilon^{(k_i)}(U_{i}), \mathcal{V}=\bigotimes_{i=1}^m A_\epsilon^{(k_i)}(V_{i})  $ and  $\Phi(\mathbf{a}_1^m)=\sum_{i=1}^m \mathbf{a}_i\mathbf{G}_i$, where $\mathbf{a}_i \in \ZZ_{p^r}^{k_i}, i\in [1:m]$.  With this notation, $|\mathcal{U}|\approx 2^{kH(U| Q)}, |\mathcal{V}|\approx 2^{kH(V| Q)}$.

\subsection{Analysis of $E_1, E_2$}
For any typical $\mathbf{x}_1$ define $\delta(\mathbf{x}_1)=\sum_{i=1}^{2^{nR_1}}\sum_{\mathbf{u}_1^m \in \mathcal{U}}  \mathbbm{1}\{\mathbf{x}_1=\phi(\mathbf{u}_1^m)+t(i)\}.$ Therefore, $E_1$ occurs if $\delta(x_1)=0$. For more convenience, we alleviate the event $E_1$. Assume $E_1$ occurs, if $\delta(\mathbf{x})< \frac{1}{2}E(\delta(x_1))$. We use Lemma \ref{lem: covering} to show that $P(E_1)$ is small enough. In this lemma set $\hat{X}=X=X_1$ with probability one, $\mathcal{C}_{out}=\mathcal{C}_1$, $R_{bin}=R_1$, $\mathcal{C}=\mathcal{C}_{in, 1}$ and $R=R_{in, 1}$. Therefore, $P(E_1)\rightarrow 0$ as $n\rightarrow \infty$, If 
\begin{align}\label{eq: covering_dist}
R_1+\frac{k}{n} H([U]_{s}|Q) \geq \log_2 p^s. 
\end{align}

Similarly for a typical sequence $\mathbf{x}_2 \in A_\epsilon^{(n)}(X_2)$, define $\eta(\mathbf{x}_2)=\sum_{j=1}^{2^{nR_2}}\sum_{\mathbf{v}_1^m \in \mathcal{V}}  \mathbbm{1}\{\mathbf{x}_2=\phi(\mathbf{v}_1^m)+\tau(j)\}.$
Notice $E_2$ occurs if $\eta(y)=0$. For more convenience, we assume $E_2$ occurs if $\eta(\mathbf{x}_2 )< \frac{1}{2}E(\eta(\mathbf{x}_2 ))$. Using a similar argument for $E_1$, we can show that $P\{E_2\}$ approaches zero, as $n \rightarrow \infty$, if 
\begin{align}\label{eq: covering_dist 2}
R_2+\frac{k}{n} H([V]_{s}|Q) \geq \log_2 p^s.
\end{align}

\subsection{Analysis  of $E_d$}
Assume there is no error in the encoding stage. Also suppose  the indices $i$ and $j$ are sent by the first and the second encoders, respectively. $E_d$ occurs at the decoder, if there exists  $\tilde{\mathbf{c}} \in \mathcal{C}_{1, in}+\mathcal{C}_{2, in}$ such that  $\tilde{\mathbf{z}}=\tilde{\mathbf{c}}+t(i)+\tau(j)$  is typical with respect to $P_{X_1+X_2}$ and $\tilde{\mathbf{z}}\neq \mathbf{x}_1+\mathbf{x}_2$. Since we assumed that there is no error at the encoders, we are looking for the event $E=E_d \cap E_1^c \cap E_2^c$. Suppose all the codebooks are fixed. This means $\phi(\cdot), t(\cdot)$ and $\tau(\cdot)$ are fixed. Conditioned on $\mathbf{x}_1, \mathbf{x}_2, i$ and $j$, the probability of $E$ equals to
\begin{align*}
P(E| \mathbf{x}_1, \mathbf{x}_2, i, j)=\mathbbm{1}\{\exists \tilde{z}\in A_\epsilon^{(n)}(X_1+X_2): \tilde{z}\neq \mathbf{x}_1+\mathbf{x}_2, \tilde{z}\in \mathcal{C}_{1, in}+\mathcal{C}_{2, in}+t(i)+\tau(j)\}
\end{align*}
Suppose the pairs $(i,\mathbf{u}_1^m)$ and $(j,\mathbf{v}_1^m)$ are chosen at the encoders. In what follows, we bound $P(E| \mathbf{x}_1, \mathbf{x}_2, i, j)$. Let  $\mathcal{W}$ be the union of $\tilde{\mathbf{u}}_1^m+\tilde{\mathbf{v}}_1^m$, for all $\tilde{\mathbf{u}}_1^m \in \mathcal{U}$ and $\tilde{\mathbf{v}}_1^m \in \mathcal{V}$. We denote $\mathcal{W}=\mathcal{U}+\mathcal{V}$. Also, define $Z=X_1+X_2$, with probability one. Using the union bound, $P(E| \mathbf{x}_1, \mathbf{x}_2, i, j)$ is bounded as,
\begin{align}\label{eq: bound on P(E|X_1X_2...)}
P(E| \mathbf{x}_1, \mathbf{x}_2, i, j) &\leq   \sum_{\substack{ \tilde{\mathbf{w}}_1^m \in \mathcal{W}\\ \tilde{\mathbf{w}}_1^m \neq \mathbf{u}_1^m+\mathbf{v}_1^m}}\sum_{ \tilde{\mathbf{z}}\in A_{\epsilon}^{(n)}(Z)} \mathbbm{1}\{\phi(\tilde{\mathbf{w}}_1^m)+t(i)+\tau(j)=\tilde{\mathbf{z}}\}
\end{align}
 
Given $\mathbf{x}_1$, if more than one pair $(i,\mathbf{u}_1^m)$ was found at the first encoder, select one randomly and uniformly. Therefore, conditioned on $\mathbf{x}_1$, a pair $(i,\mathbf{u}_1^m)$ is selected with probability $P(i, \mathbf{u}_1^m|\mathbf{x_1})=\frac{1}{\delta(\mathbf{x}_1)}\mathbbm{1}\{\phi(\mathbf{u}_1^m)+t(i)=\mathbf{x}_1\}.$ Similarly a pair $(j,\mathbf{v}_1^m)$ is selected at the second encoder with probability $P(j, \mathbf{v}_1^m|\mathbf{x}_2)=\frac{1}{\eta(\mathbf{x}_2)}\mathbbm{1}\{\phi(\mathbf{v}_1^m)+\tau(j)=\mathbf{x}_2\}.$ Since there is no encoding error, $\delta(\mathbf{x}_1) \geq \frac{1}{2}E(\delta(\mathbf{x}_1))$ and $\eta(\mathbf{x}_2) \geq \frac{1}{2}E(\eta(\mathbf{x}_2))$. As a result, 
\begin{align}\label{eq: bound on P(i, u|x_1)}
P(i, \mathbf{u}_1^m|\mathbf{x_1})\leq \frac{2}{E(\delta(\mathbf{x}_1))}\mathbbm{1}\{\phi(\mathbf{u}_1^m)+t(i)=\mathbf{x}_1\}\\\nonumber
P(j, \mathbf{v}_1^m|\mathbf{x}_2)\leq \frac{2}{E(\eta(\mathbf{x}_2))}\mathbbm{1}\{\phi(\mathbf{v}_1^m)+\tau(j)=\mathbf{x}_2\}.
\end{align}
By definition, probability of $E$ equals to

\begin{align*}
P(E)&=\sum_{(\mathbf{x}_1, \mathbf{x}_2) \in A_{\epsilon}^{(n)}(X_1X_2)} p(\mathbf{x}_1, \mathbf{x}_2) \sum_{i=1}^{2^{nR_1}} \sum_{\mathbf{u}_1^m \in \mathcal{U}} P(i, \mathbf{u}_1^m |\mathbf{x}_1) \sum_{j=1}^{2^{nR_2}} \sum_{\mathbf{v}_1^m \in \mathcal{V}} P(j, \mathbf{v}_1^m |\mathbf{x}_2)P(E|\mathbf{x}_1,\mathbf{x}_2,i, j)
\end{align*}
 Using the bounds given in (\ref{eq: bound on P(E|X_1X_2...)}) and (\ref{eq: bound on P(i, u|x_1)}), we get

\begin{align*}
P(E)&\leq  \sum_{(\mathbf{x}_1, \mathbf{x}_2) \in A_{\epsilon}^{(n)}(X_1X_2)} p(\mathbf{x}_1, \mathbf{x}_2) \sum_{i=1}^{2^{nR_1}} \sum_{\mathbf{u}_1^m \in \mathcal{U}}\frac{2}{E({\delta}(\mathbf{x}_1))}\mathbbm{1}\{\phi(\mathbf{u}_1^m)+t(i)=\mathbf{x}_1\}\\
&\sum_{j=1}^{2^{nR_2}} \sum_{\mathbf{v}_1^m \in \mathcal{V}}\frac{2}{E(\eta(\mathbf{x}_2))}\mathbbm{1}\{\phi(\mathbf{v}_1^m)+\tau(j)=\mathbf{x}_2\}\sum_{\substack{ \tilde{\mathbf{w}}_1^m \in \mathcal{W}\\ \tilde{\mathbf{w}}_1^m \neq \mathbf{u}_1^m+\mathbf{v}_1^m}}\sum_{ \tilde{z}\in A_{\epsilon}^{(n)}(Z)} \mathbbm{1}\{\phi(\tilde{\mathbf{w}}_1^m)+t(i)+\tau(j)=\tilde{\mathbf{z}}\}
\end{align*}

Note $ p(\mathbf{x}_1, \mathbf{x}_2)\approx 2^{-nH(X_1,X_2)}$. Averaging over all possible choices of $\phi(\cdot), t(\cdot)$ and $\tau(\cdot)$ gives
\begin{align*}
\EE\{P(E)\}&\leq \sum_{(\mathbf{x}_1, \mathbf{x}_2) \in A_{\epsilon}^{(n)}(X_1X_2)}  2^{-nH(X_1,X_2)} \sum_{i=1}^{2^{nR_1}} \sum_{\mathbf{u}_1^m \in \mathcal{U}}\frac{2}{E({\delta}(\mathbf{x}_1))} \sum_{j=1}^{2^{nR_2}} \sum_{\mathbf{v}_1^m \in \mathcal{V}}\frac{2}{E(\eta(\mathbf{x}_2))}\\
& \sum_{\substack{ \tilde{\mathbf{w}}_1^m \in \mathcal{W}\\ \tilde{\mathbf{w}}_1^m \neq \mathbf{w}_1^m}}\sum_{ \tilde{z}\in A_{\epsilon}^{(n)}(Z)} P\{\Phi(\mathbf{u}_1^m)+t(i)=\mathbf{x}_1, \Phi(\mathbf{v}_1^m)+\tau(j)=\mathbf{x}_2, \Phi(\tilde{\mathbf{w}}_1^m)+t(i)+\tau(j)=\tilde{\mathbf{z}}\}
\end{align*}
Note $t(i)$ and $\tau(j)$ are independent random variables and uniformly distributed over $\ZZ_{p^r}^n$. This in turn implies that the most inner term in the above summations equals 
\begin{align*}
{p^{-2nr}}P\{\Phi(\tilde{\mathbf{w}}_1^m-\mathbf{u}_1^m-\mathbf{v}_1^m)=\tilde{\mathbf{z}}-\mathbf{x}_1-\mathbf{x}_2\}.
\end{align*}
Using Lemma \ref{lem: P(phi)}, we can determine the above probability. We have,

\begin{align*}
\EE\{P(E)\}&\leq \sum_{(\mathbf{x}_1, \mathbf{x}_2) \in A_{\epsilon}^{(n)}(X_1X_2)}  2^{-nH(X_1,X_2)} \sum_{i=1}^{2^{nR_1}} \sum_{\mathbf{u}_1^m \in \mathcal{U}}\frac{2}{E({\delta}(\mathbf{x}_1))} \sum_{j=1}^{2^{nR_2}} \sum_{\mathbf{v}_1^m \in \mathcal{V}}\frac{2}{E(\eta(\mathbf{x}_2))}\\
&\sum_{s =0}^{r-1} \sum_{\substack{ \tilde{\mathbf{w}}_1^m \in \mathcal{W}\\ \tilde{\mathbf{w}}_1^m - \mathbf{u}_1^m-\mathbf{v}_1^m\in H^k_{s} }}\sum_{ \substack{ \tilde{\mathbf{z}}\in A_{\epsilon}^{(n)}(Z)\\ \tilde{\mathbf{z}}-\mathbf{x}_1-\mathbf{x}_2 \in H^n_{s}}} p^{-2nr}p^{-n(r-s)}
\end{align*}
Since the most inner terms in the above summations depend only on $s$, we can replace the corresponding summations with the size of the associated sets. Hence, using Lemma \ref{lem: typical set intersection subgroup}, we get,
\begin{align*}
\EE\{P(E)\}&\leq \sum_{(\mathbf{x}_1, \mathbf{x}_2) \in A_{\epsilon}^{(n)}(X_1X_2)}  2^{-nH(X_1,X_2)} \sum_{i=1}^{2^{nR_1}} \sum_{\mathbf{u}_1^m \in \mathcal{U}}\frac{1}{E({\delta}(\mathbf{x}_1))} \sum_{j=1}^{2^{nR_2}} \sum_{\mathbf{v}_1^m \in \mathcal{V}}\frac{1}{E(\eta(\mathbf{x}_2))}\\
&\sum_{s=0}^{r-1} 2^{nH(Z|[Z]_{s})}2^{k H(W|Q [W]_s)} p^{-2nr}p^{-n(r-s)}\\  
&\leq  2^{nR_1} |\mathcal{U}|\frac{1}{E({\delta}(\mathbf{x}_1))}2^{nR_2} |\mathcal{V}| \frac{1}{E(\eta(\mathbf{x}_2))} \sum_{s=0}^{r-1} 2^{nH(Z|[Z]_{s})}2^{ kH(W|Q, [W]_s)} p^{-2nr}p^{-n(r-s)}.
\end{align*}
Note $E({\delta}(\mathbf{x}_1))$ and $E(\eta(\mathbf{x}_2))$ can be bounded as in (\ref{eq: expectation_delta}). Therefore, we have 
\begin{align*}
\EE\{P(E)\}\leq \sum_{s =0}^{r-1}  2^{nH(Z|[Z]_{s})} 2^{k H(W|Q, [W]_s)} p^{-n(r-s)}
\end{align*}
Hence, $\EE\{P(E)\}$ tends to zero as $n\rightarrow \infty$, if for any $s \in [0:r-1]$,
\begin{align}\label{eq: packing_dist}
\frac{k}{n} H(W| Q, [W]_{s}) \leq \log_2 p^{(r-s)}-H(Z|[Z]_{s}).
\end{align}
Note  having (\ref{eq: packing_dist}) the bounds in (\ref{eq: covering_dist}) and (\ref{eq: covering_dist 2}) are redundant except the following:
\begin{align}\label{eq: covering_dist_simplified}
R_1+\frac{k}{n} H(U|Q) =  \log_2 p^r\\\label{eq: covering_dist_simplified 2}
R_2+\frac{k}{n} H(V|Q) =  \log_2 p^r
\end{align}
Lastly using (\ref{eq: covering_dist_simplified}), (\ref{eq: covering_dist_simplified 2}) and (\ref{eq: packing_dist}) the followings are achievable:
\begin{align}\label{eq: dist_achievable}
R_1\geq  \log_2 p^r- \frac{H(U|Q)}{H(W| Q, [W]_{s})} (\log_2 p^{(r-s)}-H(Z|[Z]_{s}))\\\label{eq: dist_achievable 2}
R_2\geq  \log_2 p^r- \frac{H(V|Q)}{H(W| Q, [W]_{s})} (\log_2 p^{(r-s)}-H(Z|[Z]_{s})),
\end{align}
where  we take the union over all PMF $p(u,v,q) = p(q)p(u|q)p(v|q)$, such that $p(q)$ is a rational number, $q\in [1:m]$. Since rational numbers are dense in $\RR$, one can consider arbitrary PMF $p(q)$.  Let $\mathcal{Q}$ be the set over which $Q$ takes values, and $P(Q=q)>0, q\in \mathcal{Q}$.  Note the cardinality of $\mathcal{Q}$ determines $m$ which is the number of layers used in the coding strategy. To achieve the above bounds, we show that $r$ layers is enough, i.e., $|\mathcal{Q}|\leq r$. 

\begin{lem}\label{lem: cardinality of Q}
The cardinality of $\mathcal{Q}$ is bounded by $|\mathcal{Q}|\leq r$. 
\end{lem}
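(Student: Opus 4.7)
My plan is to apply the Fenchel-Eggleston-Carathéodory support lemma to the time-sharing auxiliary $Q$, shrinking its alphabet while preserving every conditional-entropy quantity that enters the bounds~(\ref{eq: achievable bounds dist}). The Markov condition $W_1-Q-W_2$ makes each conditional law $P_{W_1 W_2\mid Q=q}$ a product distribution on $\ZZ_{p^r}\times\ZZ_{p^r}$, and every conditional entropy in the bounds is an expectation over $Q$ of a continuous functional of this product law, which places the problem squarely in the support-lemma setting.

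First I catalog the functionals. Writing $c_s:=\log_2 p^{r-s}-H(Z|[Z]_s)$, the bound at level $s$ for rate $R_i$ involves the pair $\bigl(H(W_i|Q),\,H(W|Q,[W]_s)\bigr)$. The $r$ denominator functionals $\phi_s(q):=H(W|Q=q,[W]_s)$ for $s=0,1,\ldots,r-1$ are genuinely distinct and all relevant, while the two numerator functionals $H(W_i|Q)$ are tied to $(R_1,R_2)$ through the covering equalities~(\ref{eq: covering_dist_simplified})--(\ref{eq: covering_dist_simplified 2}), which yield $H(W_i|Q)=(n/k)(\log_2 p^r-R_i)$. Holding the target rate pair and the code parameter $k/n$ fixed on the boundary turns the two numerator quantities into prescribed parameters rather than independent functionals to track, leaving only the $r$ denominator functionals $\phi_0,\ldots,\phi_{r-1}$ to be preserved by the support reduction.

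Second I invoke Fenchel-Eggleston: the vector $\bigl(\sum_q P(q)\phi_s(q)\bigr)_{s=0}^{r-1}$ lies in the convex hull of the connected set $\{(\phi_0(\pi),\ldots,\phi_{r-1}(\pi)):\pi=\mu\otimes\nu,\ \mu,\nu\in\mathcal{P}(\ZZ_{p^r})\}\subset\RR^r$. Connectedness gives a representation as a convex combination of at most $r$ points of this set, producing a new time-sharing distribution $P_{Q'}$ on an alphabet of size $\leq r$ that preserves every $\phi_s$. Since the per-$q$ conditional laws $P_{W_1|Q=q}$ and $P_{W_2|Q=q}$ at each surviving $q$ are unchanged, the Markov chain $W_1-Q'-W_2$ persists and every bound in~(\ref{eq: achievable bounds dist}) continues to hold.

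The main obstacle is the reduction from a naive count of $r+2$ functionals down to $r$: one must argue rigorously that the two numerator averages $H(W_i|Q)$ need not be separately preserved during the support shrink. The cleanest way to close this gap is to restrict the support-lemma argument to the affine fibre of distributions $P_Q$ that meet the two covering equalities at the prescribed boundary rates, and apply Fenchel-Eggleston \emph{inside that fibre} on the $r$ denominator functionals. Verifying that this restricted application still returns a valid probability distribution, and that the induced $H(W_i|Q')$ still equal the prescribed boundary values, is the delicate bookkeeping that must be completed to deliver the claimed bound $|\mathcal{Q}|\leq r$.
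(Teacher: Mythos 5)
Your overall tool is the right one (the support lemma), but the step you yourself flag as ``delicate bookkeeping'' is exactly where the argument breaks, and the fix you sketch does not work. The Fenchel--Eggleston--Carath\'eodory lemma represents the vector of \emph{preserved} averages as a convex combination of at most $d$ points of the image set $\{(\phi_0(\pi),\dots,\phi_{r-1}(\pi)):\pi\in\mathscr{P}_r\}$; it gives no control whatsoever over functionals that were not included among the $d$ coordinates. Your ``affine fibre'' is a constraint on the representing measure $P_Q$ (equivalently, on the averages of the two numerator functionals $H(W_1|Q)$ and $H(W_2|Q)$), not a constraint on the individual per-$q$ product laws, so you cannot shrink the base set to enforce it, and there is no version of the lemma that applies ``inside that fibre'' while tracking only the $r$ denominator coordinates. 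The only way to guarantee that the reduced $Q'$ still satisfies the covering equalities (\ref{eq: covering_dist_simplified})--(\ref{eq: covering_dist_simplified 2}) is to add $H(W_1|Q)$ and $H(W_2|Q)$ to the list of preserved functionals, which yields $|\mathcal{Q}|\le r+2$, not $r$. And dropping them is not harmless: if $H(W_i|Q')$ changes, the consistent value of $k/n$ changes, the coupling $\bar{R}_1/H(W_1|Q)=\bar{R}_2/H(W_2|Q)$ implicit in the two equalities can break, and the right-hand side of (\ref{eq: achievable bounds dist}) can increase, so the prescribed rate pair need not remain achievable with $Q'$.

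The paper closes this exact gap by a different reduction: it never tries to preserve numerators and denominators separately. It observes that the region described by (\ref{eq: packing_dist}), (\ref{eq: covering_dist_simplified}) and (\ref{eq: covering_dist_simplified 2}) is convex in $\RR^2$, hence characterized by its supporting hyperplanes, and for each weight $\alpha\in[0,1]$ the relevant constraints collapse into the $r$ inequalities (\ref{eq: support hyperplan dist}). The left-hand side of each such inequality is the expectation over $Q$ of a \emph{single} continuous functional of the product law $p(u|q)p(v|q)$, because the rate combination $\alpha\bar{R}_1+(1-\alpha)\bar{R}_2$ and the weighted numerators $\alpha H(U|Q=q)+(1-\alpha)H(V|Q=q)$ are folded into the same expression as the denominator $H(U+V|Q=q,[U+V]_s)$. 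That leaves exactly $r$ functionals (one per $s\in[0:r-1]$), and the support lemma then gives $|\mathcal{Q}|\le r$ for each hyperplane direction, which suffices since the convex region is recovered from its supporting hyperplanes. Without this folding step (or an equivalent device), your route delivers at best $|\mathcal{Q}|\le r+2$.
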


\begin{proof}
Note that (\ref{eq: packing_dist}), (\ref{eq: covering_dist_simplified}) and (\ref{eq: covering_dist_simplified 2}) are an alternative characterization of the achievable region. Using these equations, observe that this region is convex in $\RR^2$. As a result, we can characterize the achievable region by its supporting hyperplanes. Let $\bar{R}_i:= \log_2 p^r -R_i, i=1,2$. Using (\ref{eq: dist_achievable}) and (\ref{eq: dist_achievable 2}) for any $0\leq \alpha \leq 1 $ the corresponding supporting hyperplan is charachterized by 
\begin{align} \label{eq: support hyperplan dist}
\big(\alpha \bar{R}_1 +(1-\alpha)\bar{R}_2 \big) H(W| Q, [W]_{s}) - \Big(\alpha H(U|Q) + (1-\alpha)H(V|Q)\Big) \Big(\log_2 p^{(r-s)}-H(Z|[Z]_{s})\Big)\leq 0, 
\end{align}
where $s\in [0,r-1]$. We use the support lemma for the above inequalities to bound $|\mathcal{Q}|$. To this end, we first show that the left-hand side of these inequalities are continuous functions of conditional  PMF's of $U$ and $V$ given $Q$. Let $\mathscr{P}_r$ denote the set of all product PMF's on $\ZZ_{p^r}\times \ZZ_{p^r}$. Note $\mathscr{P}_r$  is a compact set. Fix $q\in \mathcal{Q}$. Denote $ f(p(u|q)p(v|q))=\alpha H(U|Q=q) + (1-\alpha)H(V|Q=q)$ and $g_s(p(u|q)p(v|q))=  H(U+V| Q=q, [U+V]_{s})$, where $s\in [0:r-1]$. We show that  $f(\cdot), g_s(\cdot)$ are real valued continuous functions of $ \mathscr{P}_r$. Since the entropy function is continuous then so is $f$. We can write  $g_s(p(u|q)p(v|q))= H(U+V|Q=q)-H([U+V]_s|Q=q)$. Note that $[\cdot]_s$ is a continuous function from $ \mathscr{P}_r$ to $\mathscr{P}_r$. This implies that  $H([\cdot]_s)$ is also continuous. So $g_s$ is continuous. As a result, the left-hand side of the bounds in (\ref{eq: support hyperplan dist}) are real valued continuous functions of $\mathscr{P}_r$. Therefore, we can apply the support lemma. Since there are $r$ bounds, then $|\mathcal{Q}|\leq r$. 
\end{proof}


\section{Proof of Theorem \ref{thm: com_over_mac}}\label{sec: proof of achiv_mac}
 We follow the same notation as in the proof of Lemma \ref{lem: packing}. For any $\mathbf{a}_i \in \ZZ_{p^r}^{k_i}, i\in [1:m]$, denote  $\Phi(\mathbf{a}_1^m)=\sum_{i=1}^m \mathbf{a}_i\mathbf{G}_i$. For each $i\in [1:m]$, suppose $U_i$ and $V_i$ are random variables corresponding to $\mathcal{C}_1$ and $\mathcal{C}_2$, respectively.  Let $\mathcal{U}=\bigotimes_{i=1}^m A_\epsilon^{(k_i)}(U_{i}), \mathcal{V}=\bigotimes_{i=1}^m A_\epsilon^{(k_i)}(V_{i})  $.
 
  We change the definition of the error event at the decoder. We require the decoder to decode $\mathbf{u}_i+\mathbf{v}_i, i \in [1:m] $. This is a stronger condition, but it is more convenient for error analysis. In what follows, we redefine the decoding operation. Upon receiving $\mathbf{y}$, the decoder finds $\tilde{\mathbf{w}}_i \in A_{\epsilon}^{(k_i)}(U_i+V_i)$ such that $\Phi(\tilde{\mathbf{w}}_1^m)+b_1+b_2$ is jointly typical with $\mathbf{y}$ w.r.t $P_{X_1+X_2,Y}$.  The error event $E$ occurs at the decoder, if  $\tilde{\mathbf{w}}_1^m$ is not unique.  Let $k=\sum_{i=1}^m k_i$.  As in Appendix \ref{sec: proof dist},  defined a random variable $Q$ with PMF $P(Q=i)=\frac{k_i}{k}$. Define random variables $U$ and $V$ over $\ZZ_{p^r}$ with conditional PMF  $P(U =a | Q=i )=P(U_{i}=a)$ and $P(V=a|Q=i)=P(V_{i}=a), i\in [1:m], a\in \ZZ_{p^r}$.   Assume $\Phi=\phi, \mathbf{b}_1$ and $\mathbf{b}_2$ are fixed,  the probability of $E$ is  
 \begin{align*}
P(E|  \phi, \mathbf{b}_1, \mathbf{b}_2)=&\sum_{u_1^m\in \mathcal{U}}\frac{1}{|\mathcal{U}|} \sum_{x_1\in \ZZ_{p^r}^n} \mathbbm{1}\{ x_1 = \phi(u_1^m)+b_1\} \sum_{v_1^m\in \mathcal{V}}\frac{1}{|\mathcal{V}|} \sum_{x_2\in \ZZ_{p^r}^n} \mathbbm{1}\{ x_2= \phi(v_1^m)+b_2\} \sum_{y\in A_\epsilon^n(Y|x_1,x_2)}  P(y|x_1, x_2)\\
& \mathbbm{1}\{ \exists ~ \tilde{w}_1^m \in  \mathcal{W}  : \tilde{w}_1^m \neq u_1^m+v_1^m,   \phi(\tilde{w}_1^m)+b_1+b_2 \in A_\epsilon^n(X_1+X_2|y) \}
\end{align*} 
Denote $w_1^m=u_1^m+v_1^m, Z=X_1+X_2$ and $W_i=U_i+V_i$ with probability one. Denote  $\mathcal{W}=\bigotimes_{i=1}^m A_\epsilon^{(k_i)}(W_i)$. By upper bounding the last indicator function, we get 

\begin{align*}
P(E | \phi, b_1, b_2) &\leq \sum_{u_1^m\in \mathcal{U}}\frac{1}{|\mathcal{U}|} \sum_{x_1\in H^n} \mathbbm{1}\{ x_1 = \phi(u_1^m)+b_1\} \sum_{v_1^m\in \mathcal{V}}\frac{1}{|\mathcal{V}|} \sum_{x_2\in H^n} \mathbbm{1}\{ x_2= \phi(v_1^m)+b_2\}\\
& \sum_{y\in A_\epsilon^n(Y|x_1,x_2)}  P(y|x_1, x_2) \sum_{\substack{ \tilde{w}_1^m\in \mathcal{W} \\ \tilde{w}_1^m \neq w_1^m}}\sum_{\tilde{z}\in A_\epsilon^n(Z|y)} \mathbbm{1}\{\tilde{z}=\phi(\tilde{w}_1^m)+b_1+b_2\}
\end{align*} 

Averaging over $\phi, b_1$ and $b_2$ yields:
\begin{align*}
P_e=\EE\{P(E|\Phi, B_1, B_2)\} \leq &\sum_{u_1^m\in \mathcal{U}}\frac{1}{|\mathcal{U}|} \sum_{x_1\in H^n} \sum_{v_1^m\in \mathcal{V}}\frac{1}{|\mathcal{V}|} \sum_{x_2\in H^n} \sum_{y\in A_\epsilon^n(Y|x_1,x_2)}  P(y|x_1, x_2) \sum_{\substack{ \tilde{w}_1^m\in \mathcal{W} \\ \tilde{w}_1^m \neq w_1^m}}\sum_{\tilde{z}\in A_\epsilon^n(Z|y)} \\ &P\{\tilde{z}=\Phi(\tilde{w}_1^m)+B_1+B_2, x_1 = \Phi(u_1^m)+B_1,x_2= \Phi(v_1^m)+B_2\}
\end{align*}

Notice that $B_1$ and $B_2$ are uniform over $\ZZ_{p^r}^n$ and independent of other random variables. Hence, the most inner term in the above summations is simplified to
\begin{align*}
p^{-2nr} P\{\tilde{z}-x-y=\Phi(\tilde{w}_1^m-w_1^m)\}
\end{align*}
Using  Lemma \ref{lem: P(phi)},  $P_e$ can be bounded as
\begin{align} \label{eq: pe_1}
P_e\leq &\sum_{u_1^m\in \mathcal{U}}\frac{1}{|\mathcal{U}|} \sum_{x_1\in \ZZ_{p^r}^n} \sum_{v_1^m\in \mathcal{V}}\frac{1}{|\mathcal{V}|} \sum_{x_2\in \ZZ_{p^r}^n} \sum_{y\in A_\epsilon^n(Y|x_1,x_2)}  P(y|x_1, x_2) \sum_{s=0}^{r-1}\sum_{\substack{ \tilde{w}_1^m\in \mathcal{W} \\ \tilde{w}_1^m-w_1^m \in H^k_{s}\backslash H^k_{s+1}}}\sum_{\substack{\tilde{z}\in A_\epsilon^n(Z|y)\\ \tilde{z}-x-y \in H^n_{s}}} p^{-2nr} p^{-n(r-s)}
\end{align}
Note the most inner term in the above summations does not depend on the value of $\tilde{z}$ and $\tilde{w}_1^m$.  Using Lemma \ref{lem: typical set intersection subgroup}, we have
\begin{align*}
P_e\leq &\sum_{u_1^m\in \mathcal{U}}\frac{1}{|\mathcal{U}|} \sum_{x_1\in \ZZ_{p^r}^n} \sum_{v_1^m\in \mathcal{V}}\frac{1}{|\mathcal{V}|} \sum_{x_2\in \ZZ_{p^r}^n}  \sum_{y\in A_\epsilon^n(Y|x_1,x_2)}  P(y|x_1, x_2) \sum_{s=0}^{r-1} 2^{kH(W|Q, [W]_s)}~ 2^{nH(Z|Y[Z]_{s})}p^{-2nr} p^{-n(r-s)},
\end{align*}
where $W=U+V$. As the terms in (\ref{eq: pe_1}) does not depend on the values of $\mathbf{u}_1^m, \mathbf{v}_1^m, \mathbf{x}_1, \mathbf{x}_2$ and  $\mathbf{y}$, we can replace the summations over them with the corresponding sets. As a result, we have 
\begin{align*}
P_e\leq &  \sum_{s=0}^{r-1}  p^{-n(r-s)}  2^{kH(W|Q, [W]_s)}~ 2^{nH(Z|Y[Z]_{s})}
\end{align*}
 Therefore, $P_e$ approaches zero as $n\rightarrow \infty$, if the following bounds hold:
\begin{align}\label{equ: bound simple form}
\frac{k}{n} H(W|Q, [W]_s) \leq   \log_2 p^{r-s}- H(Z|Y[Z]_{s}), \quad \mbox{for} ~ 0\leq s\leq r-1.
\end{align}
Since  $X_1+X_2$ is uniform over $\ZZ_{p^r}$, the right-hand side of (\ref{equ: bound simple form}) equals $I(Z;Y|[Z]_s)$. Note by definition $R_1=\frac{1}{n}\log_2 |\mathcal{C}_j|\leq \frac{1}{n}\log_2 |\mathcal{U}| \leq \frac{k}{n} H(U|Q)  $. Similarly, $R_2\leq \frac{k}{n} H(V|Q) $. Therefore, these inequalities along with (\ref{equ: bound simple form}) give,
\begin{align*}
R_1 \leq \frac{H(U|Q)}{H(W|Q, [W]_s) } I(X_1+X_2;Y|[X_1+X_2]_s)\\
R_2 \leq \frac{H(V|Q)}{H(W|Q, [W]_s) } I(X_1+X_2;Y|[X_1+X_2]_s).
\end{align*}
We take the union of the above region over all possible PMFs $p(u,v,q)=p(q)p(u|q)p(v|q)$, where $p(q)$ is a rational number. Since rational numbers are dense, we can let the PMF of $Q$ be arbitrary. the bounds in the theorem follows be denoting $W_1=U$ and $W_2=V$. Using the same argument as in Lemma \ref{lem: cardinality of Q}, we bound the cardinality of $Q$ by  $|\mathcal{Q}| \leq r$.

\end{document}